\newtheorem{theorem}{Theorem}
\newenvironment{proof}[1][Proof]{\noindent\textbf{#1.} }{\ \rule{0.5em}{0.5em}}
\begin{document}

\title{Incorporating Social Welfare in Program-Evaluation and Treatment
Choice\thanks{%
We are grateful to Meredith Crowley, Peter Hammond, Arthur Lewbel and Albert
Park for discussions related to the topic of this paper.}}
\author{Debopam Bhattacharya\thanks{%
Bhattacharya acknowledges financial support from the European Research
Council via a Consolidator Grant EDWEL, Project number 681565.} \\
University Of Cambridge\thanks{%
Address for correspondence: debobhatta@gmail.com} \and Tatiana Komarova \\
University of Manchester}
\date{October 15, 2022}
\maketitle

\begin{abstract}
The econometric literature on treatment-effects typically takes functionals
of outcome-distributions as `social welfare' and ignores program-impacts on
unobserved utilities. We show how to incorporate aggregate utility within
econometric program-evaluation and optimal treatment-targeting for a
heterogenous population. In the practically important setting of
discrete-choice, under unrestricted preference-heterogeneity and
income-effects, the indirect-utility distribution becomes a closed-form
functional of average demand. This enables nonparametric cost-benefit
analysis of policy-interventions and their optimal targeting based on
planners' redistributional preferences. For ordered/continuous choice,
utility-distributions can be bounded. Our methods are illustrated with
Indian survey-data on private-tuition, where income-paths of
usage-maximizing subsidies differ significantly from welfare-maximizing ones.

Keywords: Discrete Choice, Unobserved Heterogeneity, Nonparametric
Identification, Social Welfare, Indirect Utility, Cost-Benefit Analysis,
Policy Interventions

JEL Codes C14 C25 D12 D31 D61 D63
\end{abstract}

\pagebreak

\section{Introduction}

Data-driven program evaluation has become key to modern policy analysis, and
has produced a large body of research on treatment-effect analysis cf.
Heckman-Vytlacil 2007, Imbens-Wooldridge 2009, and on optimal treatment
choice cf. Manski 2004. Both these literatures have exclusively focused on
functionals of\emph{\ outcome }distributions as the key object of interest,
and bypassed the classic public-economics question of measuring
program-effects on unobservable\emph{\ utilities} of individuals. For
example, a college financial aid program would typically be evaluated in the
econometric tradition via its `treatment effect' on aggregate enrolment or
future earnings etc., and the treatment-choice problem would address the
question of how to target a limited amount of subsidy funds to maximize this
aggregate cf. Bhattacharya and Dupas 2012, Kitagawa and Tetenov 2018 etc.
This approach ignores the question of how much and how differently do the
individuals themselves value the subsidy -- determined by their
willingness-to-pay for college -- and what is the subsidy's effect on
aggregate utility, weighted by the social planner's distributional
preferences. In particular, those subsidy-eligibles who would attend college
irrespective of the subsidy would contribute nothing to the average
treatment-effect, although their savings from subsidized tuition would raise
their utility without changing their attendance behavior. Secondly, in many
practical settings, multiple related outcomes of policy-interest are likely
to be affected by a single intervention. For example, a price subsidy for
mosquito-nets (cf. Dupas 2014) can be evaluated in terms of aggregate
take-up of nets, incidence of malaria, school absence of children and so
forth. The aggregate utility approach provides a natural way to aggregate
these separate effects via how they determine the households' overall
willingness-to-pay for the mosquito-net. Third, price-interventions for
redistributions are often politically motivated. Efficiency-costs of such
policies to society are therefore important metrics of political assessment.

Indeed, in public economics, cost-benefit analysis of an intervention was
traditionally conducted by comparing the expenditure on it with the change
it brings about in aggregate indirect-utility, cf. Bergson 1938, Samuelson
1947 and Mirrlees 1971. However, when bringing these concepts to data, this
literature ignored unobserved heterogeneity and imposed arbitrary
functional-form restrictions on the utility functions of `representative'
consumers who were assumed to vary solely in terms of observables cf. Deaton
1984 and Ahmad and Stern 1984. Later work such as Feldstein 1999, Saez 2001
have shown that in labour-supply models with consumption-leisure trade-off
and heterogeneous agents, the optimal income-tax rate depends on individual
heterogeneity via certain aggregates only, such as the average elasticity of
taxable income w.r.t. the marginal tax-rate, where the taxable income
distribution is endogenously determined with the tax-rate. For empirical
implementation, these results require parametric modelling of preferences:
cf. Saez 2001 page 219 and Section 5. Similarly, Manski 2014 derives bounds
on optimal income-tax schedule when consumers have heterogeneous
Cobb-Douglas preferences. In an econometric sense, these are not
nonparametric `identification' results that express the object of interest
(consumer welfare, optimal tax-schedule etc.) in terms of quantities
directly estimable from the data without making unjustified functional-form
assumptions about unobservables.

The present paper shows that in the practically important setting of
multinomial choice, the distribution of consumers' indirect utilities
induced by unobserved heterogeneity, can be expressed as closed-form
functionals of choice-probability functions. This result assumes no
knowledge of functional-form of utilities, nature of income-effects and
dimension/distribution of unobserved preference-heterogeneity. Thus, solely
the knowledge of choice-probabilities enables fully nonparametric evaluation
of interventions and design of optimal treatment-choice based on aggregate
utility. The knowledge of the indirect utility distribution further permits
measurement of efficiency-loss required to ensure a desired average outcome;
for instance, in our tuition-subsidy example above, one can calculate the
monetary value of the subsidy-induced market distortion (excess-burden)
necessary to reach an enrolment target of say 80\%, thus providing a
theoretically justified numerical measure of the equity-efficiency trade-off
involved. The above exercise cannot be performed entirely nonparametrically
in ordered/continuous choice settings; but sharp bounds on
welfare-distributions can be calculated there using our result.

An alternative way to evaluate welfare is via expenditure-function based
Hicksian measures, viz. the equivalent and compensating variation (EV/CV).
These are hypothetical income adjustments necessary to maintain individual
utilities. While useful for measuring the distribution of change in
individual utility, \textit{adding} Hicksian measures across consumers to
obtain a measure of social welfare change involves judgments that are known
to be conceptually problematic. These include (i) an implicit assumption of
a constant social marginal utility of income, i.e. that an additional dollar
is valued by society identically no matter whether a rich or a poor person
gets it, cf. Blackorby and Donaldson 1988, Dreze 1998, Banks et al 1996,
(ii) ranking alternative interventions by their associated Hicksian
compensations amounts to ranking based on \textit{changes}, as opposed to 
\textit{levels,} of individual satisfaction cf. Slesnick 1998, Chipman and
Moore 1990, and (iii) comparing allocations via the aggregate compensation
principle, as implied by adding EV/CV across consumers, leads to Scitovszky
(1941) reversals, where two distinct allocations can both dominate and be
dominated by each other in terms of social welfare. These problems with
aggregate compensation criteria have led to widespread use of
Bergson-Samuelson aggregate indirect utility for applied welfare analysis in
public finance.\footnote{%
Aggregate indirect utility embodies interpersonal comparisons of preferences
which, as noted by Hammond 1990 \textquotedblleft ... have to be made if
there is to be any satisfactory escape from Arrow's impossibility theorem,
with its implication that individualistic social choice has to be
dictatorial ... or else that is has to restrict itself to solely
recommending Pareto improvements.\textquotedblright\ } The `log-sum' formula
for consumer surplus (cf. Train 2003, Chap 3.5), routinely used for
welfare-analysis in empirical IO, is precisely the average indirect utility
in the parametric multinomial logit (BLP) model. Interestingly, as shown
below, there is a theoretical link in the discrete choice case between
changes in aggregate indirect-utility and Hicksian compensations for removal
of alternatives. Further, based on the aggregate utility, one can define a
micro-founded measure of `welfare-inequality', analogous to the Atkinson
index of income-inequality. Unlike CV/EV, however, the aggregate utility
requires a normalization for empirical content, as is implicitly assumed in
public finance. In the restrictive but popular special case of quasilinear
preferences, under which demand is income-invariant, the change in indirect
utility approach coincides with the Hicksian/Marshallian ones.

Ahmad and Stern 1984, Mayshar 1990, Hendren and Sprung-Keyser 2020 and
Hendren and Finkelstein 2020 investigated social cost-benefit analysis for 
\emph{marginal} interventions, using the concept of `marginal value of
public funds' (MVPF), defined as the ratio of beneficiaries' marginal
willingness-to-pay for a policy-change at the status-quo to its marginal
cost for the government. This approach does not cover non-marginal
interventions and does not clarify how to account for unobserved preference
heterogeneity across individuals targeted by such non-marginal
interventions. Obviously, the larger the intervention, the poorer the
resulting approximation by marginal cost-benefit analysis (see our
application below). It is also not obvious how one would use the status-quo
MVPF for optimal targeting of interventions. Garcia and Heckman 2022 propose
the net social benefit as an alternative to the MVPF for ranking different
programs based on their opportunity costs. Our results facilitate empirical
calculation of all such quantities.

The next section outlines the theory, presents our main result and provides
related discussions, Section 3 presents an empirical illustration. Section 4
concludes.

\section{Theory}

\subsection{Set-up and Main Result}

There is a population of heterogeneous individuals, each facing a choice
between $J+1$ exclusive, indivisible options. Examples include whether to
attend college, whether to adopt a health-product, choice of phone-plan etc.
Let $N$ represent the quantity of numeraire which an individual consumes in
addition to the discrete good. If the individual has income $Y=y$, and faces
a price $P_{j}=p_{j}$ for the $j$th option, then the budget constraint is $%
N+\sum\limits_{j=1}^{J}Q_{j}p_{j}=y$, $\sum_{j=0}^{J}Q_{j}=1$ where $%
Q_{j}\in \left\{ 0,1\right\} $, $j=0,...,J$ represents the discrete choice,
with $0$ denoting the outside option (set $p_{0}=0$). Individuals derive
satisfaction from both the discrete good as well as the numeraire. Upon
buying the $j$th option, an individual derives utility from it and from
numeraire $y-p_{j}$, denoted by $U_{j}\left( y-p_{j},\eta \right) $, where $%
\eta $ denotes unobserved (by us) preference heterogeneity of unspecified
dimension and distribution. Upon not buying any of the $J$ alternatives, she
enjoys utility from her outside option and the full numeraire $y$, given by $%
U_{0}\left( y,\eta \right) $. Observable characteristics of consumers and/or
the alternatives are suppressed here for notational simplicity. We assume
strict non-satiation in the numeraire, i.e. that $U_{j}\left( \cdot ,\eta
\right) $ and $U_{0}\left( \cdot ,\eta \right) $ are strictly increasing in
their first argument for each realization of $\eta $. On each budget set
defined by the price vector $\mathbf{p}\equiv \left( p_{1},...,p_{J}\right) $
and consumer income $y$, there is a structural probability of choosing
option $j$, denoted by $q_{j}\left( \mathbf{p},y\right) $; that is, if each
member of the entire population were offered income $y$ and price $\mathbf{p}
$, then a fraction $q_{j}\left( \mathbf{p},y\right) $ would buy the $j$th
alternative, with $q_{0}\left( \mathbf{p},y\right) $ denoting not buying any
of the $J$ alternatives, i.e.%
\begin{equation}
q_{j}\left( \mathbf{p},y\right) =\int 1\left\{ U_{j}\left( y-p_{j},\eta
\right) >\max_{\substack{ k\in \left\{ 0,1,...,J\right\}  \\ k\neq j}}%
U_{k}\left( y-p_{k},\eta \right) \right\} dF\left( \eta \right) \text{,}
\label{qbar_defn}
\end{equation}%
where $F\left( \cdot \right) $ denotes the \textit{marginal} distribution of 
$\eta $. Note that the above set-up allows for completely general unobserved
heterogeneity and income effects. Finally, the indirect utility function is
given by 
\begin{equation*}
W\left( \mathbf{p},y,\eta \right) =\max \left\{ U_{0}\left( y,\eta \right)
,U_{1}\left( y-p_{1},\eta \right) ,...,U_{J}\left( y-p_{J},\eta \right)
\right\} \text{.}
\end{equation*}%
Note that this function is decreasing in each price and increasing in
income. Therefore, a concave functional of $W\left( \mathbf{p},y,\eta
\right) $ will correspond to assigning larger weights to those with lower
utility and income.

\textbf{Normalization}: Since a monotone transformation of a utility
function represents the same ordinal preferences and leads to the same
choice, we need to normalize one of the alternative-specific utility
functions in order to give empirical content to the indirect utility
function. Toward that end, suppose that for each $\eta $, the function $%
U_{0}\left( \cdot ,\eta \right) $ is strictly increasing (non-satiated) and
continuous in the numeraire and, therefore, invertible. Then $U_{j}\left(
y-p_{j},\eta \right) \geq U_{0}\left( y,\eta \right) $ if and only if $%
U_{0}^{-1}\left( U_{j}\left( y-p_{j},\eta \right) ,\eta \right) \geq y$;
also $U_{0}^{-1}\left( U_{j}\left( y-p_{j},\eta \right) ,\eta \right) \geq
U_{0}^{-1}\left( U_{k}\left( y-p_{k},\eta \right) ,\eta \right) $ if and
only if $U_{j}\left( y-p_{j},\eta \right) \geq U_{k}\left( y-p_{k},\eta
\right) $. Therefore, $\mathcal{U}_{0}\left( y,\eta \right) \equiv y$ and $%
\mathcal{U}_{j}\left( y-p_{j},\eta \right) \equiv U_{0}^{-1}\left(
U_{j}\left( y-p_{j},\eta \right) ,\eta \right) $ is an equivalent
normalization of utilities representing exactly the same individual
preferences as $\left\{ U_{j}\left( y-p_{j},\eta \right) \right\} ,$ $%
j=1,...,J\ $and $U_{0}\left( y,\eta \right) $. This is analogous to the
empirical IO convention that utility from the outside good be normalized to
zero. Arbitrary functional-form specification for utilities (e.g.
CES/CARA/CRRA) in traditional structural modelling assumes much more, in
addition to an implicit normalization. Further, welfare-changes often result
from removing/adding inside alternatives, normalizing utility of the \textit{%
outside} option which remains unaffected by such changes, therefore seems
natural. It also leads to an interpretation of indirect utility as a
compensated income (see Sec 2.3 below). So, from now, we will work under
this normalization.\smallskip 

\begin{theorem}
In the above set-up, assume that $U_{0}\left( \cdot ,\eta \right) $ is
continuous and strictly increasing. Then the marginal distribution of
indirect utility induced by the distribution of $\eta $ at fixed values of $%
\mathbf{p}$ and $y$ is nonparametrically identified from average demand.
\end{theorem}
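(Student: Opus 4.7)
The plan is to express the CDF of $W(\mathbf{p},y,\eta)$ at the given $(\mathbf{p},y)$ as the choice probability of the outside option evaluated at a carefully constructed counterfactual budget. Under the normalization $\mathcal{U}_0(y,\eta)=y$, the indirect utility reads $W(\mathbf{p},y,\eta)=\max\{y,\mathcal{U}_1(y-p_1,\eta),\ldots,\mathcal{U}_J(y-p_J,\eta)\}$. In particular $W\geq y$ almost surely, so $\Pr(W\leq w)=0$ for $w<y$, and only $w\geq y$ is nontrivial.

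For $w\geq y$, I would first use strict monotonicity of $U_0(\cdot,\eta)$ to rewrite the event $\mathcal{U}_j(y-p_j,\eta)\leq w$ equivalently as $U_j(y-p_j,\eta)\leq U_0(w,\eta)$. The right-hand inequality says that option $0$ is weakly preferred to option $j$ on a hypothetical budget where option $j$ still leaves $y-p_j$ units of numeraire while option $0$ delivers $w$ units. The key move is to recognize this as an actual budget: take income to be $w$ and set $\tilde{\mathbf{p}}=\mathbf{p}+(w-y)\mathbf{1}$, with $\mathbf{1}$ the $J$-vector of ones, so that the numeraire after buying $j$ is $w-\tilde{p}_j=y-p_j$. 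Then option $j$ yields the same $U_j(y-p_j,\eta)$ as before while option $0$ yields $U_0(w,\eta)$.

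Combining these steps gives, for $w\geq y$,
\begin{equation*}
\Pr\bigl(W(\mathbf{p},y,\eta)\leq w\bigr)=\Pr\bigl(U_0(w,\eta)\geq U_j(y-p_j,\eta),\; j=1,\ldots,J\bigr)=q_0\bigl(\mathbf{p}+(w-y)\mathbf{1},\,w\bigr),
\end{equation*}
and since $q_0=1-\sum_{j=1}^J q_j$ is a functional of average demand, the right-hand side is directly observable. This delivers a constructive nonparametric identification of the CDF of $W$ as announced.

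The main obstacle is spotting the correct counterfactual perturbation. Raising income from $y$ to $w$ is what turns the outside option's normalized utility into the target threshold $w$; but doing so naively would also shift the numeraire on every inside alternative through income effects of unknown form. Adding $w-y$ to every inside price exactly cancels that shift, preserving the realized numeraire $y-p_j$ on each inside option and thereby accommodating income effects of arbitrary form. The argument also relies implicitly on continuity of the preference heterogeneity so that the strict inequalities defining $q_j$ in (\ref{qbar_defn}) agree almost surely with their weak counterparts.
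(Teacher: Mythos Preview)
Your proof is correct and follows essentially the same route as the paper: normalize so that $W\geq y$, then for $w\geq y$ (the paper uses $c$) use strict monotonicity of $U_0(\cdot,\eta)$ to convert $\mathcal{U}_j(y-p_j,\eta)\leq w$ into $U_j(y-p_j,\eta)\leq U_0(w,\eta)$, and finally recognize this as the event that option $0$ is chosen at income $w$ and prices $\mathbf{p}+(w-y)\mathbf{1}$, yielding $\Pr(W\leq w)=q_0(\mathbf{p}+(w-y)\mathbf{1},w)$. Your added remark that the argument implicitly needs ties to be negligible (so that the strict inequality in the definition of $q_0$ matches the weak inequality in the CDF) is a fair point that the paper leaves unstated.
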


\begin{proof}
Using the normalization $\mathcal{U}_{j}\left( y-p_{j},\eta \right) \equiv
U_{0}^{-1}\left( U_{j}\left( y-p_{j},\eta \right) ,\eta \right) $ and $%
\mathcal{U}_{0}\left( y,\eta \right) \equiv y$, the indirect utility equals%
\begin{equation}
W\left( \mathbf{p},y,\eta \right) =\max \left\{ y,U_{0}^{-1}\left(
U_{1}\left( y-p_{1},\eta \right) ,\eta \right) ,...,U_{0}^{-1}\left(
U_{J}\left( y-p_{J},\eta \right) ,\eta \right) \right\} \text{.}  \label{1}
\end{equation}%
We wish to compute the (structural) distribution function of $W\left( 
\mathbf{p},y,\eta \right) $ induced by the marginal distribution of $\eta $,
for fixed $\mathbf{p},y$

Now, note that by (\ref{1}), $W\left( \mathbf{p},y,\eta \right) \geq y$ a.s.
Therefore, take $c>y$, and note that%
\begin{eqnarray*}
&&\Pr \left[ \max \left\{ U_{0}^{-1}\left( U_{1}\left( y-p_{1},\eta \right)
,\eta \right) ,...,U_{0}^{-1}\left( U_{J}\left( y-p_{J},\eta \right) ,\eta
\right) \right\} \leq c\right] \\
&=&\Pr \left[ \max \left\{ U_{1}\left( y-p_{1},\eta \right) ,...,U_{J}\left(
y-p_{J},\eta \right) \right\} \leq U_{0}\left( c,\eta \right) \right] \text{%
, since }U_{0}\left( \cdot ,\eta \right) \text{ cont. and strictly}\nearrow
\\
&=&\Pr \left[ \max \left\{ U_{1}\left( c-\left( c-y+p_{1}\right) ,\eta
\right) ,...,U_{J}\left( c-\left( c-y+p_{J}\right) ,\eta \right) \right\}
\leq U_{0}\left( c,\eta \right) \right] \\
&=&q_{0}\left( c-y+p_{1},..c-y+p_{J},c\right) \text{.}
\end{eqnarray*}%
Therefore, the C.D.F.\ of $W\left( \mathbf{p},y,\eta \right) $ generated by
randomness in $\eta $ is given by%
\begin{equation}
\Pr \left[ W\left( \mathbf{p},y,\eta \right) \leq c\right] =\left\{ 
\begin{array}{l}
0\text{ if }c<y \\ 
q_{0}\left( c-y+p_{1},..c-y+p_{J},c\right) \text{ if }c\geq y%
\end{array}%
\right.  \label{4}
\end{equation}
\end{proof}

It is easily verified (see appendix) that $U_{0}\left( \cdot ,\eta \right) $
being strictly increasing implies that the C.D.F. is non-decreasing in $c$.

To interpret (\ref{4}) intuitively, note that for $c\geq y$, (\ref{4}) is
equivalent to 
\begin{equation*}
\Pr \left[ W\left( \mathbf{p},y,\eta \right) >c\right] =1-q_{0}\left(
c-y+p_{1},..c-y+p_{J},c\right) \text{.}
\end{equation*}%
Indeed, if $c\geq y$, the only $\eta $'s who attain a welfare value larger
than $c$ must buy one of $\left\{ 1,...,J\right\} $, since choosing $0$
yields $y\leq c$, which explains the functional form $1-q_{0}(\cdot )$. The
arguments $\left( c-y+p_{1},..c-y+p_{J},c\right) $ arise from the facts that
reaching utility larger than $c$ requires that choosing the maximand $j\in
\left\{ 1,...,J\right\} $ and ending up with numeraire $y-p_{j}$ must yield
higher utility than choosing $0$ at income $c$ which yields utility $c$.

Lastly, note that by definition, $W\left( \mathbf{p},y,\eta \right) $ is
measured in units of money, which will be useful both in cost-benefit
analysis and in comparison with Hicksian compensation.\smallskip

\subsection{Social Welfare Calculations}

Social welfare (Bergson 1938, Atkinson 1970) at price $\mathbf{p}$ for
individuals with income $y$ and unobserved heterogeneity $\eta $ is given by 
$\frac{W\left( \mathbf{p},y,\eta \right) ^{1-\varepsilon }}{1-\varepsilon }$%
, where $0\leq \varepsilon <1$ denotes the planner's inequality aversion
parameter. Therefore, the distribution of social welfare at fixed income $y$
across consumers follows from (\ref{4}). In particular, average (over
unobserved heterogeneity) social welfare (ASW henceforth) at income $y$ is%
\begin{equation}
\mathcal{W}^{\varepsilon }\left( \mathbf{p},y\right) \equiv \int \frac{%
W\left( \mathbf{p},y,\eta \right) ^{1-\varepsilon }}{1-\varepsilon }dF\left(
\eta \right) =E_{\eta }\left\{ \frac{W\left( \mathbf{p},y,\eta \right)
^{1-\varepsilon }}{1-\varepsilon }\right\} \text{,}  \label{3}
\end{equation}%
where $E_{\eta }$ denotes expectation taken with respect to the marginal
distribution of $\eta $. Note that (\ref{3}) takes the form of an exact
counterpart for measuring income inequality using compensated -- instead of
ordinary -- income. For example, one can compute the analogs of the Gini
coefficient or the Atkinson index for welfare inequality based on the
distribution of $W\left( \mathbf{p},y,\eta \right) $ simply by replacing
ordinary income by the compensated income as defined in the LHS of (\ref{5})
below. Calculation of $\mathcal{W}^{\varepsilon }\left( \mathbf{p},y\right) $%
, is facilitated by the observation that%
\begin{equation}
\int_{b}^{\infty }x^{\alpha }f_{X}\left( x\right) dx=b^{\alpha }\left(
1-F_{X}\left( b\right) \right) +\alpha \int_{b}^{\infty }x^{\alpha -1}\left(
1-F_{X}\left( x\right) \right) dx  \label{12}
\end{equation}%
using integration by parts. Therefore, from (\ref{4}), (\ref{3}) and (\ref%
{12}), $\mathcal{W}^{\varepsilon }\left( \mathbf{p},y\right) $ equals%
\begin{eqnarray}
&&\frac{y^{1-\varepsilon }}{1-\varepsilon }\times \Pr \left[ W\left( \mathbf{%
p},y,\eta \right) =y\right] +\int_{y}^{\infty }\frac{c^{1-\varepsilon }}{%
1-\varepsilon }\times f_{W\left( \mathbf{p},y,\eta \right) }\left( c\right)
dc  \notag \\
&=&\frac{y^{1-\varepsilon }}{1-\varepsilon }+\int_{0}^{\infty }\left(
z+y\right) ^{-\varepsilon }\times \left\{ 1-q_{0}\left(
z+p_{1},z+p_{2},...,z+p_{J},z+y\right) \right\} dz\text{.}\footnotemark 
\label{2}
\end{eqnarray}%
\footnotetext{%
For standard parametric CDFs like probit or logit, the integral is bounded
for $0\leq \varepsilon \leq 1$.}For $\varepsilon =0$, i.e. utilitarian
planner preferences, (\ref{2}) reduces to the line integral%
\begin{equation}
y+\int_{0}^{\infty }\left\{ 1-q_{0}\left(
z+p_{1},z+p_{2},...,z+p_{J},z+y\right) \right\} dz  \label{7}
\end{equation}

\textbf{Optimal Targeting Problem}: The optimal subsidy targeting problem
maximizes aggregate welfare subject to a budget constraint on subsidy
spending. Suppose in our multinomial set-up, the planner considers
subsidizing alternative 1. Let $M$ denote the aggregate subsidy budget,
expressed in per capita terms, $F_{Y}\left( \cdot \right) $ the marginal
distribution of income in the population, $\sigma \left( y\right) $ the
amount of subsidy that a household with income $y$ will be entitled to, $%
\mathcal{T}$ denote the set of politically/practically feasible targeting
rules $\sigma \left( \cdot \right) $, and $C\left( y,\sigma \left( y\right)
\right) $ the cost per capita of offering subsidy $\sigma \left( y\right) $
to individuals whose income is $y$; for example, in the multinomial case
with alternative 1 being subsidized, $C\left( y,\sigma \left( y\right)
\right) $ equals $\int \sigma \left( y\right) \times q_{1}\left( \bar{p}%
-\sigma \left( y\right) ,y\right) dF_{Y}\left( y\right) $. Then the optimal
subsidy solves 
\begin{equation}
\arg \max_{\sigma \left( \cdot \right) \in \mathcal{T}}\int \mathcal{W}%
\left( \bar{p}-\sigma \left( y\right) ,\mathbf{p}_{-1},y\right) dF_{Y}\left(
y\right) \text{ s.t. }\int C\left( y,\sigma \left( y\right) \right)
dF_{Y}\left( y\right) =M\text{.}  \label{6}
\end{equation}%
Taxes can be incorporated into the analysis by allowing $\mathcal{T}$ to
contain functions that take on negative values. In particular, a
revenue-neutral welfare maximizing rule that taxes the rich, i.e. $\sigma
\left( \cdot \right) <0$ and subsidizes the poor i.e. $\sigma \left( \cdot
\right) >0$, will solve (\ref{6}) with $M=0$.\smallskip

\textbf{Comparison with Income-transfer}: Price subsidies, as opposed to a
pure income-transfer, entails a deadweight loss due to the distortionary
effect of the price-intervention on behavior. In the context of problem (\ref%
{6}), the aggregate value of this excess-burden can be computed as the
difference between $M$ and the value function of problem (\ref{6}). Indeed,
the worldwide discussion of a universal basic income (cf. Banerjee et al
2019) can be informed by calculating the deadweight loss of various
price-subsidies that the UBI would seek to replace.\smallskip

\textbf{Hicksian Interpretation: }Our measure (\ref{1}) can be interpreted
as the Hicksian CV corresponding to removal of all the inside alternatives.
To see this, consider an initial situation where none of the inside
alternatives $1,...,J$ is available $(p_{1}=p_{2}=...=p_{J}=\infty $,
denoted by the price vector $\mathbf{\infty }_{J}$) and an the eventual
situation when they become available at price vector $\mathbf{p}$. Using the
utility functions $\mathcal{U}_{j}\left( y-p_{j},\eta \right)
=U_{0}^{-1}\left( U_{j}\left( y-p_{j},\eta \right) ,\eta \right) $ and $%
\mathcal{U}_{0}\left( y,\eta \right) =y$, the former indirect utility is $y$
since other options are unavailable, and the latter indirect utility is%
\begin{equation*}
\max \left\{ y,U_{0}^{-1}\left( U_{1}\left( y-p_{1},\eta \right) ,\eta
\right) ,...,U_{0}^{-1}\left( U_{J}\left( y-p_{J},\eta \right) ,\eta \right)
\right\} .
\end{equation*}%
Then the CV $CV\left( y,\mathbf{p,}\infty _{J},\eta \right) $ for this
change solves%
\begin{eqnarray}
&&y+CV\left( y,\mathbf{p,}\infty _{J},\eta \right)  \notag \\
&=&\max \left\{ y,U_{0}^{-1}\left( U_{1}\left( y-p_{1},\eta \right) ,\eta
\right) ,...,U_{0}^{-1}\left( U_{J}\left( y-p_{J},\eta \right) ,\eta \right)
\right\}  \notag \\
&=&W\left( \mathbf{p},y,\eta \right) \text{, by (\ref{1}).}  \label{5}
\end{eqnarray}%
Thus the indirect utility at price $\mathbf{p}$ and income $y$ equals the
compensated income at $y$ that equates individual utility when none of the
alternatives $1,...,J$ was available to the utility when they become
available at price $\mathbf{p}$. It follows from (\ref{5}) that the
difference in individual indirect utility between two prices $\mathbf{p}^{0}$
and $\mathbf{p}^{1}$ equals%
\begin{equation}
W\left( \mathbf{p}^{1},y,\eta \right) -W\left( \mathbf{p}^{0},y,\eta \right)
=CV\left( y,\mathbf{p}^{1},\infty _{J},\eta \right) -CV\left( y,\mathbf{p}%
^{0},\infty _{J},\eta \right) .  \label{10}
\end{equation}%
Note however that%
\begin{equation}
W\left( \mathbf{p}^{1},y,\eta \right) -W\left( \mathbf{p}^{0},y,\eta \right)
\neq -CV\left( y,\mathbf{p}^{0},\mathbf{p}^{1},\eta \right) \text{,}
\label{11}
\end{equation}%
(proved in the Appendix); so asking if $\mathbf{p}^{1}$ is worse than $%
\mathbf{p}^{0}$ on the basis of ASW is not the same as asking if the average
CV for a move from $\mathbf{p}^{0}$ to $\mathbf{p}^{1}$ is positive.
Therefore, comparing two situations on the basis of aggregate Hicksian
compensation is different from comparing them based on the Bergson-Samuelson
ASW criterion.\smallskip

\textbf{Comparing ASW with CV}: The CV answers the question: what change in
income $y$ would have resulted in the same change of utility as a given
change of prices $p$, relative to a baseline level of $(y;p)$; whereas the
average indirect utility defined here answers: what income would yield the
same level of utility as a given level $(y;p)$, assuming that individuals
are prohibited from purchasing the discrete options under consideration, but
are free in all other choices. Unlike aggregate CV, the ASW criterion does
not assume that social marginal utility of income is constant across income,
and does not suffer from conceptual ambiguities like Scitovszky reversal of
social preferences (cf. Mas-Colell et al 1995 page 830-31), as illustrated
in Figure \ref{fig:ASWvsCV}.

\begin{figure}[tbp]
\begin{center}
\includegraphics[
height=2.1909in,
width=2.2872in
]{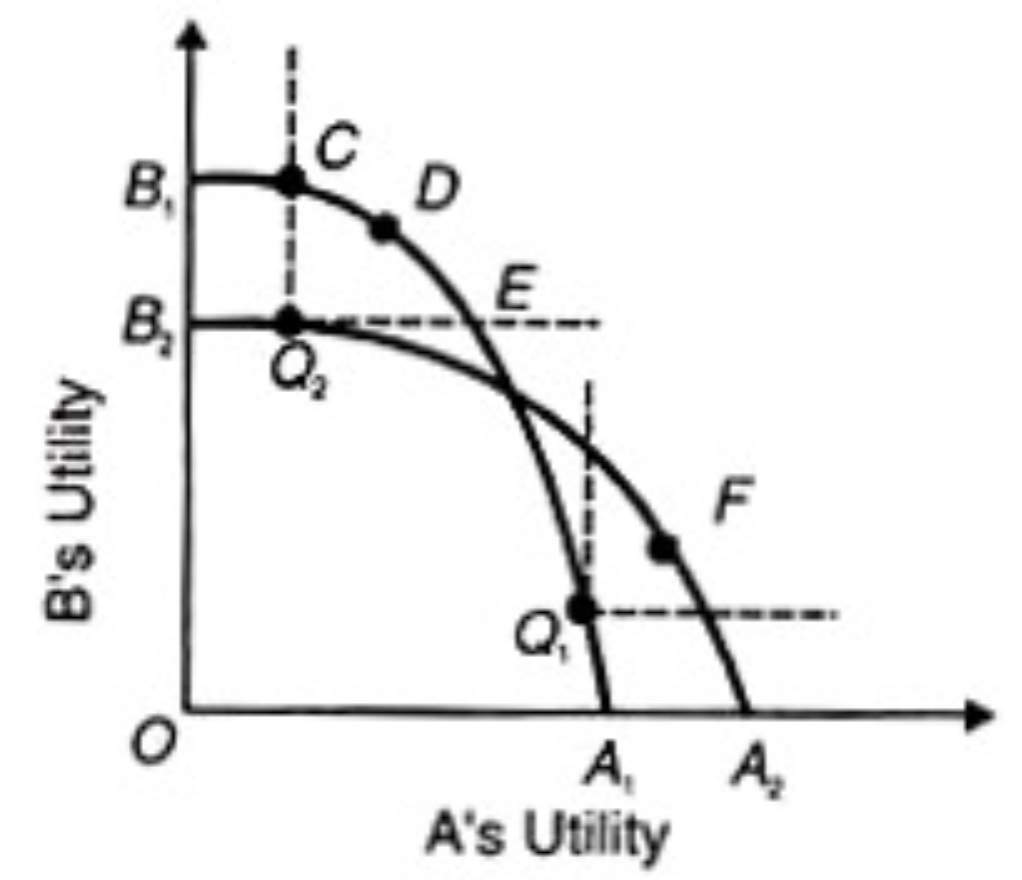}
\end{center}
\caption{Comparison of ASW with CV: An illustration.}
\label{fig:ASWvsCV}
\end{figure}


Figure \ref{fig:ASWvsCV} shows two allocations $Q_{1}$ and $Q_{2}$ with the
utility possibility frontiers $A_{1}Q_{1}DB_{1}$ and $A_{2}FQ_{2}B_{2}$
through them intersecting. Each frontier represents the utility combinations
attainable via redistribution between individuals A and B, starting from any
point on it. Then the allocation $D$ Pareto dominates $Q_{2}$ and can be
attained from $Q_{1}$ via redistribution. Therefore $Q_{1}$ is \textit{%
superior} to $Q_{2}$ via the aggregate compensation principle. At the same
time, the allocation $F$ which can be attained via redistribution from $%
Q_{2} $ is Pareto superior to $Q_{1}$, implying that $Q_{1}$ is \textit{%
inferior} to $Q_{2}$ via the compensation principle; so aggregate CV is
again negative, thus leading to an ambiguity. These conceptual shortcomings
of the Hicksian approach have led instead to widespread use of the
Bergson-Samuelson ASW criterion in applied research in public finance. In
empirical IO, the widely used log-sum measure of consumer-welfare is
precisely the ASW in the multinomial logit model, cf. Train 2003, Sec
3.5.\smallskip

\textbf{Quasilinear Utilities}: If $U_{j}\left( y-p_{j},\eta \right)
=h_{j}\left( \eta \right) +y-p_{j}$, and $U_{0}\left( y,\eta \right)
=h_{0}\left( \eta \right) +y$, i.e. utility is quasilinear in the numeraire,
then it can be shown (see appendix for derivation) that%
\begin{equation}
W\left( \mathbf{p}^{1},y,\eta \right) -W\left( \mathbf{p}^{0},y,\eta \right)
=-CV\left( y,\mathbf{p}^{0},\mathbf{p}^{1},\eta \right) \text{,}
\end{equation}%
and $W\left( \mathbf{p},y,\eta \right) $ equals%
\begin{equation}
\left\{ y+\max \left\{ 0,h_{1}\left( \eta \right) -p_{1}-h_{0}\left( \eta
\right) ,...,h_{J}\left( \eta \right) -p_{J}-h_{0}\left( \eta \right)
\right\} \right\} dF\left( \eta \right) \text{,}  \label{18a}
\end{equation}%
so that the social marginal utility of income $\frac{\partial }{\partial y}%
\int W\left( \mathbf{p},y,\eta \right) dF\left( \eta \right) $ equals 1,
which does not depend on $y$; i.e. society is indifferent between giving a
dollar to a rich versus a poor individual. Thus aggregate CV gives a
legitimate measure of social welfare when the social marginal utility of
income is constant (cf. Blackorby and Donaldson 1988).

\subsection{Binary Choice}

Our application is a binary choice setting with $J=1$, where a subsidy on
alternative 1 changes its price from $\bar{p}$ to $\bar{p}-\sigma $. In this
case, the subsidy-induced \textbf{change in average social welfare} at
income $y$ for a generic $\varepsilon \geq 0$ is given by%
\begin{equation}
\Delta \left( \bar{p},\sigma ,y;\varepsilon \right) \equiv \int_{0}^{\infty
}\left( z+y\right) ^{-\varepsilon }\times \left[ q_{1}\left( \bar{p}-\sigma
+z,y+z\right) -q_{1}\left( \bar{p}+z,y+z\right) \right] dz\text{;}
\label{8d}
\end{equation}%
under $\varepsilon =0$, we have that%
\begin{equation}
\Delta \left( \bar{p},\sigma ,y;0\right) \equiv \int_{0}^{\infty }\left[
q_{1}\left( \bar{p}-\sigma +z,y+z\right) -q_{1}\left( \bar{p}+z,y+z\right) %
\right] dz\text{;}  \label{8a}
\end{equation}%
the \textbf{average CV} at income $y$ (cf. Bhattacharya 2015, eqn. 10) equals%
\begin{equation}
S\left( \bar{p},\sigma ,y\right) \equiv \int_{\bar{p}-\sigma }^{\bar{p}%
}q_{1}\left( p,y+p-\bar{p}+\sigma \right) dp\overset{\text{subs }z=p-\bar{p}%
+\sigma }{=}\int_{0}^{\sigma }q_{1}\left( \bar{p}-\sigma +z,y+z\right) dz%
\text{.}  \label{8b}
\end{equation}%
Furthermore,%
\begin{equation}
\int_{0}^{\infty }q_{1}\left( \bar{p}+z,y+z\right) dz\overset{\text{sub }%
t=z+\sigma }{=}\int_{\sigma }^{\infty }q_{1}\left( \bar{p}+t-\sigma
,y+t-\sigma \right) dt  \label{8f}
\end{equation}%
and therefore, from (\ref{8a}), (\ref{8b}) and (\ref{8f}) we have that%
\begin{eqnarray}
&&\Delta \left( \bar{p},\sigma ,y;0\right) -S\left( \bar{p},\sigma ,y\right)
\notag \\
&=&\int_{\sigma }^{\infty }\left[ q_{1}\left( \bar{p}-\sigma +z,y+z\right)
-q_{1}\left( \bar{p}-\sigma +z,y-\sigma +z\right) \right] dz  \label{8g}
\end{eqnarray}%
Now, the integrand in (\ref{8g}) is strictly positive (negative) for all $z$
if option 1 is normal (resp. inferior). Therefore, the only way $\Delta
\left( \bar{p},\sigma ,y;0\right) =S\left( \bar{p},\sigma ,y\right) $ is
that $q_{1}\left( p,y\right) $ does not depend on $y$, which implies
utilities are quasilinear, and therefore by (\ref{18a}), the social marginal
utility of income equals 1.

The \textbf{average treatment effect}, the quantity most commonly used in
program evaluation and the treatment choice literature, equals%
\begin{equation}
T\left( \bar{p},\sigma ,y\right) \equiv q_{1}\left( \bar{p}-\sigma ,y\right)
-q_{1}\left( \bar{p},y\right) \text{,}  \label{8c}
\end{equation}%
which is simply the integrand of (\ref{8a}) evaluated at the lower limit of
the integral. Since this is measured as \textit{quantity} of demand, a
direct comparison with average or marginal subsidy cost is not possible. In
contrast, the quantities $\Delta \left( \cdot ,\cdot ,\cdot \right) $ or $%
S\left( \cdot ,\cdot ,\cdot \right) $ provide theoretically justified
monetary values of the choice, based on the choice-makers' own preference.

\textbf{Deadweight Loss (DWL)}: The average cost of the subsidy equals $%
\sigma \times q_{1}\left( \bar{p}-\sigma ,y\right) $ in every case.
Therefore, the DWL of the subsidy under $\varepsilon =0$ is given by%
\begin{eqnarray}
&&\sigma \times q_{1}\left( \bar{p}-\sigma ,y\right) -\Delta \left( \bar{p}%
,\sigma ,y;0\right)  \notag \\
&=&\int_{\bar{p}-\sigma }^{\bar{p}}\left[ q_{1}\left( \bar{p}-\sigma
,y\right) -q_{1}\left( t,y+t-\bar{p}+\sigma \right) \right] dt  \notag \\
&&+\int_{\bar{p}}^{\infty }\left[ q_{1}\left( t,y+t-\bar{p}\right)
-q_{1}\left( t,y+t-\bar{p}+\sigma \right) \right] dt\text{.}  \label{22}
\end{eqnarray}%
Note that the first term in (\ref{22}) is positive because%
\begin{eqnarray}
&&q_{1}\left( t,y+t-\bar{p}+\sigma \right) -q_{1}\left( \bar{p}-\sigma
,y\right)  \notag \\
&=&\Pr \left[ U_{1}\left( y-\bar{p}+\sigma ,\eta \right) \geq U_{0}\left(
y+t-\bar{p}+\sigma ,\eta \right) \right] -\Pr \left[ U_{1}\left( y-\bar{p}%
+\sigma ,\eta \right) \geq U_{0}\left( y,\eta \right) \right]  \notag \\
&\leq &0\text{, for }t\geq \bar{p}-\sigma \text{ since }U_{0}\left( \cdot
,\eta \right) \text{ is strictly increasing.}  \label{23}
\end{eqnarray}%
The second term will be negative if the good is normal, and the DWL may be 
\textit{negative} if the income effect is strongly positive. This is in
contrast to the deadweight loss based on the CV which must necessarily be
non-negative.

Finally, Hendren-Finkelstein's MVPF at the status-quo ($p=\bar{p}$, $\sigma
=0$) equals%
\begin{equation}
\left. \frac{\frac{\partial }{\partial \sigma }\Delta \left( \bar{p},\sigma
,y;0\right) }{\frac{\partial }{\partial \sigma }\left\{ \sigma \times
q_{1}\left( \bar{p}-\sigma ,y\right) \right\} }\right\vert _{\sigma =0}%
\overset{\text{by (\ref{8a})}}{=}\frac{-\int_{0}^{\infty }\frac{\partial
q_{1}\left( \bar{p}+z,y+z\right) }{\partial p}dz}{q_{1}\left( \bar{p}%
,y\right) }  \label{MVPFratio}
\end{equation}

\subsection{Treatment targeting}

The constrained, optimal subsidy allocation problem takes the form%
\begin{equation}
\max_{\sigma \left( \cdot \right) \in \mathcal{T}}\int B\left( \bar{p}%
,\sigma \left( y\right) ,y\right) dF_{Y}\left( y\right) \text{ s.t. }\int
\sigma \left( y\right) \times q_{1}\left( \bar{p}-\sigma \left( y\right)
,y\right) dF_{Y}\left( y\right) \leq M\text{,}  \label{8e}
\end{equation}%
where $M$ denotes the planner's budget constraint, $\mathcal{T}$ denotes the
set of politically/practically feasible targeting rules, $F_{Y}\left( \cdot
\right) $ is the marginal distribution of income in the population, and $%
B\left( \cdot ,\cdot ,\cdot \right) $ is one of $\Delta \left( \cdot ,\cdot
,\cdot \right) $, $S\left( \cdot ,\cdot ,\cdot \right) $ or $T\left( \cdot
,\cdot ,\cdot \right) $, defined in (\ref{8a})-(\ref{8c}).\smallskip

\textbf{Parameter Uncertainty}: Note that (\ref{8e}) seeks to maximize
welfare of the individuals we observe. If instead, we treat our data as a
random sample from a population, and wish to maximize welfare for that
population, then we would need to take parameter uncertainty into account.
This can be done by defining a loss function%
\begin{eqnarray}
L\left( \sigma \left( \cdot \right) ,\theta ,c\right) &=&-\int B\left( \bar{p%
},\sigma \left( y\right) ,y,\theta \right) dF\left( y,\theta _{1}\right) 
\notag \\
&&+c\left[ M-\int \sigma \left( y\right) \times q_{1}\left( \bar{p}-\sigma
\left( y\right) ,y,\theta _{2}\right) dF\left( y,\theta _{1}\right) \right]
^{2}\text{,}  \label{lossfunctiongeneral}
\end{eqnarray}%
where $c$ denotes the penalty incurred by the planner from violating the
budget constraint, and $\theta _{1}$, $\theta _{2}$ denote the parameters
determining the marginal distribution of income and the demand function e.g.
logit coefficients, respectively. Then define the optimal choice of $\sigma
\left( \cdot \right) $ under a Bayesian criterion by solving%
\begin{equation}
\min_{\sigma \left( \cdot \right) }\int L\left( \sigma \left( \cdot \right)
,\theta ,c\right) dP_{post}\left( \theta |data\right)
\label{lossfunctiongeneral2}
\end{equation}%
where $P_{post}\left( \theta |data\right) $ refers to the posterior
distribution of $\theta $ given the data. For computational simplicity, one
can use the bootstrap distribution of $\theta $ to approximate the posterior
corresponding to a flat prior (cf. Hastie et al 2009).

\subsection{Identification and Estimation}

Theorem 1 expresses the distribution of indirect utility in terms of the
structural choice probability defined in (\ref{qbar_defn}). Learning the
entire distribution of $W\left( \mathbf{p},y,\eta \right) $ at fixed $%
\mathbf{p},y$ would require one to estimate $q_{0}\left(
c-y+p_{1},..c-y+p_{J},c\right) $ for all values of $c$. In any finite
dataset, of course there will be limited variation of prices and income. So
one can use a flexible parametric model such as random coefficients to
estimate $q_{0}\left( c-y+p_{1},..c-y+p_{J},c\right) $ as is popular in
empirical IO; shape restrictions on the choice probability functions (cf.
Bhattacharya, 2021) can be imposed by restricting the support of the random
coefficients. Any such parametric approximation would obviously impose
additional restrictions on preference that are not required for Theorem 1 to
hold.\footnote{%
In particular, in a binary setting, a probit functional form with constant
coefficients implicitly assumes additive scalar unobserved preference
heterogeneity which implies rank invariance across consumers (cf.
Bhattacharya 2021, page 463).} Alternatively, one can remain nonparametric
and work with bounds. In particular,%
\begin{eqnarray*}
&&q_{0}\left( c-y+p_{1},..c-y+p_{J},c\right) \\
&=&\Pr \left[ U_{0}\left( c,\eta \right) \geq \max \left\{ U_{1}\left(
y-p_{1},\eta \right) ...U_{J}\left( y-p_{J},\eta \right) \right\} \right]
\end{eqnarray*}%
and $U_{j}\left( \cdot ,\eta \right) $ being strictly increasing for each $j$%
, yields nonparametric bounds on $q_{0}\left( c-y+p_{1},..c-y+p_{J},c\right) 
$. Specifically, let $S=\left\{ \mathbf{r},z\right\} $ with $\mathbf{r=}%
\left( r_{1},...,r_{J}\right) $ be the set of price-income combinations
observed in sample. Then lower and upper bounds on $q_{0}\left(
c-y+p_{1},..c-y+p_{J},c\right) $ are given by%
\begin{eqnarray*}
LB\left( c;\mathbf{p},y\right) &=&\max \left\{ q_{0}\left( \mathbf{r}%
,z\right) :\left( \mathbf{r},z\right) \in S,\text{ }z-r_{j}\geq y-p_{j},%
\text{ }z\leq c\right\} \\
UB\left( c;\mathbf{p},y\right) &=&\min \left\{ q_{0}\left( \mathbf{r}%
,z\right) :\left( \mathbf{r},z\right) \in S,\text{ }z-r_{j}\leq y-p_{j},%
\text{ }z\geq c\right\} \text{.}
\end{eqnarray*}%
Arguments presented in Bhattacharya 2021, Proposition 1 imply that these
bounds are sharp.

Finally, if price and/or income are endogenous to individual preference,
i.e. independence between utilities and budget set does not hold, then
consistent estimation of $q_{0}$ would require the use of control
function-type methods cf. Rivers-Vuong 1988, Newey 1987, Blundell-Powell
2004. We apply Newey's approach in our empirical illustration below.

\subsection{Ordered Discrete Choice and the Continuous Case}

A result analogous to Theorem 1 does not hold for consumption of continuous
goods such as gasoline (cf. Poterba, 1991) and food (Kochar 2005). To see
why, consider the situation of ordered choice with 0 denoting the outside
good and 1, 2 with unit price $p$ denoting the two inside good (e.g. no
apple, 1 apple and 2 apples). Let the utilities be $U_{0}\left( y,\eta
\right) $, $U_{1}\left( y-p,\eta \right) $, $U_{2}\left( y-2p,\eta \right) $%
. As above, normalize%
\begin{equation*}
W\left( p,y,\eta \right) \equiv \max \left\{ y,U_{0}^{-1}\left( U_{1}\left(
y-p,\eta \right) ,\eta \right) ,U_{0}^{-1}\left( U_{2}\left( y-2p,\eta
\right) ,\eta \right) \right\}
\end{equation*}

Now, for $c\geq y$, we have that%
\begin{eqnarray}
&&\Pr \left[ \max \left\{ y,U_{0}^{-1}\left( U_{1}\left( y-p,\eta \right)
,\eta \right) ,U_{0}^{-1}\left( U_{2}\left( y-2p,\eta \right) ,\eta \right)
\right\} \leq c\right]  \notag \\
&=&\Pr \left[ \max \left\{ U_{1}\left( y-p,\eta \right) ,U_{2}\left(
y-2p,\eta \right) \right\} \leq U_{0}\left( c,\eta \right) \right]  \notag \\
&=&\Pr \left[ \max \left\{ U_{1}\left( c-\left( c-y+p\right) ,\eta \right)
,U_{1}\left( c-\left( c-y+2p\right) ,\eta \right) \right\} \leq U_{0}\left(
c,\eta \right) \right]  \notag \\
&=&q_{0}\left( c-y+p,c-y+2p,c\right) \text{.}  \label{17}
\end{eqnarray}

But $q_{0}\left( c-y+p,c-y+2p,c\right) $ cannot be estimated, no matter how
much $p$ and $y$ vary, because the data can only identify demand when the
price of 2 units is twice the price of 1 unit; but $c-y+2p\neq 2\left(
c-y+p\right) $. The continuous case can be thought of as the limiting case
of ordered choice, e.g. one has to pay twice as much for 2 gallons of
gasoline as for 1 gallon, and by the same logic, the welfare distribution
for this case cannot be point-identified. One can however obtain bounds on (%
\ref{17}) via%
\begin{eqnarray*}
&&L\left( c;p,y\right) \\
&\equiv &\max \left\{ q_{0}\left( \tilde{p},2\tilde{p},\tilde{y}\right) :%
\tilde{y}-\tilde{p}\geq y-p,\tilde{y}-2\tilde{p}\geq y-2p,\tilde{y}\leq
c\right\} \\
&\leq &\Pr \left[ \max \left\{ U_{1}\left( y-p,\eta \right) ,U_{2}\left(
y-2p,\eta \right) \right\} \leq U_{0}\left( c,\eta \right) \right] \\
&\equiv &q_{0}\left( c-y+p,c-y+2p,c\right) \\
&\leq &\min \left\{ q_{0}\left( \tilde{p},2\tilde{p},\tilde{y}\right) :%
\tilde{y}-\tilde{p}\leq y-p,\tilde{y}-2\tilde{p}\leq y-2p,\tilde{y}\geq
c\right\} \\
&\equiv &H\left( c;p,y\right)
\end{eqnarray*}%
and $L\left( c;p,y\right) $ and $H\left( c;p,y\right) $ are both potentially
identifiable because they represent demand in situations where price of
option 2 is twice the price of option 1. Note that $L\left( \cdot
;p,y\right) $ and $H\left( \cdot ;p,y\right) $ satisfy all properties of
CDF's.

\section{Empirical Illustration: Private Tuition in India}

\label{sec:app1}

Private, remedial tuition for children outside schools is ubiquitous in
South Asia. Most of this is provided on a for-profit basis by
school-teachers themselves. This creates perverse incentives for them to
reduce their efforts inside the school classroom, cf. Jayachandran 2014.
Thus children of richer households, who can afford the additional
tuition-fees, benefit from the educational support outside school, whereas
those from poorer households suffer the adverse consequences of
lower-quality classroom-teaching. One possible way to address this problem
is to tax private tuition for richer households and use the tax proceedings
to subsidize poorer children. We investigate, empirically, the impact of
this hypothetical policy intervention on social welfare, using the methods
developed above.

\begin{table}[tbp]
    \caption{Application data summaries and welfare calculations.}
    \label{table:PTvalues1}
    \begin{center}
    {Summary statistics.} 
    \end{center}
    \par
\begin{tabular}{lccccc}
\hline
Variable & Mean & St. dev. & Median & Min & Max \\ \hline
choice & 0.4067 & 0.4912 & 0 & 0 & 1 \\ 
price & 3291.6 & 3689 & 2000 & 0 & 20000 \\ 
monthly income (per household member) & 9579.5 & 6379.2 & 8000 & 100 & 86700
\\ 
household size & 5.8799 & 2.4896 & 5 & 1 & 32 \\ 
age of child & 12.2658 & 3.5747 & 12 & 5 & 18 \\ 
male & 0.5395 & 0.4984 & 1 & 0 & 1 \\ \hline
\end{tabular}%
\newline
\par
\bigskip 
\par 
\vskip 0.2in
    \begin{center}
    {Welfare estimates.} 
    \end{center}

\vskip 0.2in
\begin{tabular}{lcccc}
\hline
Welfare quantity & At median income & Averaged over income &  &  \\ \hline
average compensating variation & 1499.5 & 1501.8 &  &  \\ 
\quad & (30.056) & (31.531) &  &  \\ 
average treatment effect & 0.0561 & 0.0561 &  &  \\ 
\quad & (0.0134) & (0.0135) &  &  \\ 
change in average social welfare, $\varepsilon=0$ & 1523.4 & 1520.9 &  &  \\ 
\quad & (281.822) & (279.252) &  &  \\ 
change in average social welfare, $\varepsilon=0.5$ & 10.51 & 10.43 &  &  \\ 
\quad & (1.681) & (1.672) &  &  \\ 
change in average social welfare, $\varepsilon=1$ & 0.0775 & 0.0787 &  &  \\ 
\quad & (0.0132) & (0.0136) &  &  \\ \hline
\end{tabular}%
\vskip 0.2in

{\small Welfare quantities in (\ref{8a})-(\ref{8c}) calculated at $\bar{p}%
=4200$ (75th percentile of the price distribution), $\bar{p}-\sigma =900$
(25th percentile). Covariate values are at their median values. Bootstrap
standard errors in parentheses are based on 400 replications.}
\end{table}

We use micro-data from India's National Sample Survey 71st round, conducted
in January-June 2014. The key variables and summary statistics are reported
in Table \ref{table:PTvalues1}. Our sample size is 51092. There are two
important data issues here. Firstly, if a household does not purchase
private tuition, we do not observe their potential spending had they bought
it. This is a well-known empirical issue in discrete choice applications; we
address it by using the average price of those opting for private tuition in
the village/block of the reference household to impute that price. An
intuitive justification is that households are likely to base their decision
on information they gather from acquaintances, and tuition-rates are
unlikely to vary much within a neighborhood. A second issue, given that the
data are non-experimental, is that prices are likely to be correlated with
unobserved tuition-quality. We address this using `Hausman-instruments'
which are average price in other villages/blocks in the same strata
(sampling areas larger than blocks but smaller than districts). The
first-stage F-statistic has a p-value of $10^{-5}$.

We model demand for private tuition as 
\begin{equation*}
q_{1}(p,y)=\Phi \left( \beta _{p}p+\sum_{m=1}^{M}\beta _{y,m}\mathcal{R}%
_{m,M+q}(y;q)+x^{\prime }\beta \right) ,
\end{equation*}%
where $p$ denotes price, $\mathcal{R}_{m,M+q}(y;q)$ are base B-splines of
degree $q$ in monthly income $y$ and the covariate vector $x$ representing
household size, the child's age and sex. We use Newey's (1987) two-step
estimation approach that assumes joint normality of $u$ in the model for the
latent variable 
\begin{equation*}
q_{1}^{\ast }=\beta _{p}p+\sum_{m=1}^{M}\beta _{y,m}\mathcal{R}%
_{m,M+q}(y;q)+x^{\prime }\beta +u
\end{equation*}%
and the error in the reduced-form equation for price. To impose shape
constraints, in the second step, we require 
\begin{align*}
& \beta _{p}\leq 0, \\
& \beta _{p}+\sum_{m=2}^{M}\dfrac{\beta _{y,m}-\beta _{y,m-1}}{z_{m+q}-z_{m}}%
\mathcal{R}_{m-1,M+q-1}(y;q-1)\leq 0
\end{align*}%
with the second inequality imposed on a finite grid of income values. In the
second inequality, $z_{m}$ denote knots on the support of income used in the
construction of B-splines. The first (second) inequality guarantees that $q_{1}(p,y)$
is decreasing in price (in the direction $(1,1)$, i.e. $\frac{\partial }{\partial p}q_{1}(p,y)+\frac{\partial }{%
\partial y}q_{1}(p,y)\leq 0$ (cf. Bhattacharya 2021)). 

The left of Figure \ref{fig:ApptPT_ACVASW_minus_cost} plots demand as a
function of price for fixed income, and as a function of income for fixed
price for a household with a representative set of characteristics. The
inverted U-shape of the income graph agrees with widespread anecdotal
evidence that academic success is primarily a middle-class aspiration in
India, cf. Varma 2007.

The middle of Figure \ref{fig:ApptPT_ACVASW_minus_cost} shows the ACV and
change in ASW, net of average cost, at median income and price over a range
of subsides. We approximated the change in ASW at a given income value $y$
by calculating integrals in the definition of CASW from 0 to $y_{max}-y$,
where $y_{max}$ denotes the largest observed income. This approximation is
quite accurate as the values of integrands around $y_{max}-y$ are
decreasing, taking values less than 0.0003. A negative subsidy is a tax, and
the corresponding net-benefit equals the tax-revenue less utility-loss. We
consider taxes up to 20\% of median price and subsidy up to $Med(p)-\min (p)$%
. In the same figure, we plot the net-benefit approximation by the MVPF,
i.e. $\sigma \times (numerator-demonominator)$ of (\ref{MVPFratio}). This
curve is a straight line through the origin, showing the declining accuracy
of first-order approximation as $\sigma $ rises.

The rightmost panel of Figure \ref{fig:ApptPT_ACVASW_minus_cost} shows the
difference between ACV and change in ASW, when income-effects are/aren't
allowed.

\begin{sidewaysfigure}[tbp]
\begin{center}
\vspace{0in} 
\begin{minipage}{\linewidth}
\begin{minipage}{0.33\linewidth}
  \includegraphics[width=0.95\linewidth]{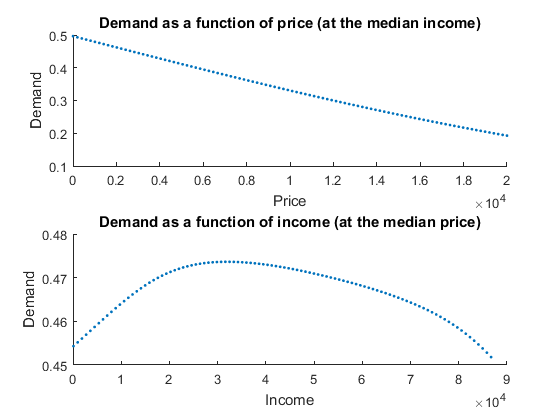}
\end{minipage}
\begin{minipage}{0.33\linewidth}
\includegraphics[width=0.95\linewidth]{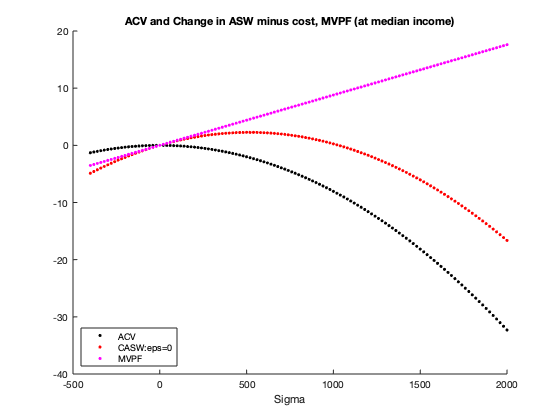}
\end{minipage}
\begin{minipage}{0.33\linewidth}
\includegraphics[width=0.95\linewidth]{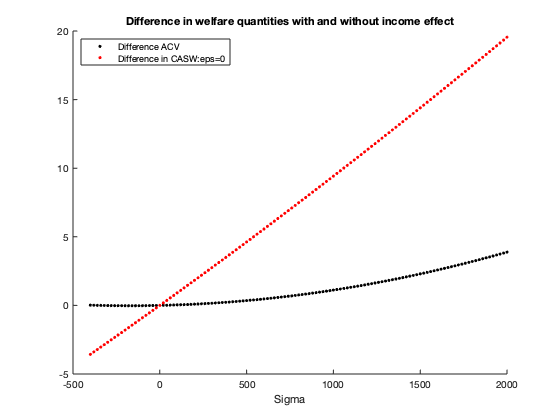}
\end{minipage}
\end{minipage}
\end{center}
\caption{{\protect\small Left: Illustration of demand estimation. Middle: ACV, change in
ASW ($\protect\varepsilon =0$) net of average cost, and the first-order effect based on MVPF ($\bar{p}$ and $y$ are at their median levels). Covariate
values are at their median levels. Right: Difference in ACV, change in
ASW ($\protect\varepsilon =0$) with and without income effect.}}
\label{fig:ApptPT_ACVASW_minus_cost}
\end{sidewaysfigure}

The income-effect is strong; consequently, the  ACV and change in ASW curves differ substantially. Secondly, while deadweight loss
for ACV is necessarily positive, that for the ASW is actually \textit{%
negative} (benefit exceeds cost) over a range of income, which empirically
illustrates our discussion around eqn. (\ref{23}) above.

Table \ref{table:PTvalues1} reports changes in ASW for $\epsilon =0,0.5,1$,
ACV and ATE calculated at $\bar{p}$ equal to the 75th percentile of the
price distribution, $\bar{p}-\sigma $ equal to the 25th, covariates set
equal to their median values, and $y$ set equal to median income $Med(y)$.
We also include bootstrap standard errors.

A natural treatment-assignment problem in this case is to optimally
subsidize the poor by taxing the rich in a budget-neutral way. The formal
problem, analogous to (\ref{8e}) is%
\begin{equation}
\max_{\sigma \left( \cdot \right) \in \mathcal{T}}\int B\left( \bar{p}%
,\sigma \left( y\right) ,y\right) dF_{Y}\left( y\right) \text{ s.t. }\int
\sigma \left( y\right) \times q_{1}\left( \bar{p}-\sigma \left( y\right)
,y\right) dF_{Y}\left( y\right) =0\text{,}  \label{24}
\end{equation}%
where $\mathcal{T}$ is now the space of spline functions which can take both
positive and negative values. The optimal allocation where $B\left( \cdot
,\cdot ,\cdot \right) $ corresponds to average treatment-effect, change in
ASW with $\varepsilon =0$ and average CV are shown in Figure \ref{fig:App2allocations}.

\begin{sidewaysfigure}[tbp]
\begin{center}
\vspace{0in} 
\begin{minipage}{\linewidth}
\begin{minipage}{0.5\linewidth}
  \includegraphics[width=0.95\linewidth]{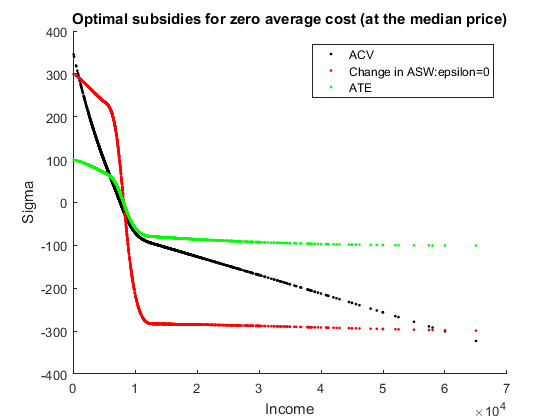}
\end{minipage}
\begin{minipage}{0.5\linewidth}
\includegraphics[width=0.95\linewidth]{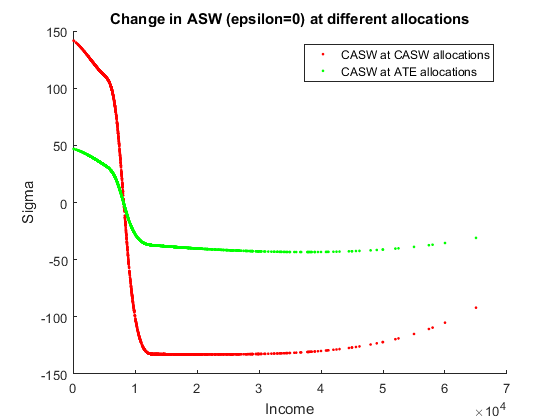}
\end{minipage}
\end{minipage}
\end{center}
\caption{{\protect\small Left: Optimal subsidies for three different welfare
criteria with the budget constraint giving the zero average cost. Right: Change in ASW ($\protect\epsilon=0$) calculated at
the CASW ($\protect\epsilon=0$) optimal allocation path and at the ATE
optimal allocation path.}}
\label{fig:App2allocations}
\end{sidewaysfigure}

Three features stand out. First, the allocation maximizing the ACV
differs from the one that maximizes the CASW at $\epsilon =0$; this results
from the large income-effect. Second, all three curves are downward sloping,
because price-sensitivity of demand declines monotonically with income and
the overall price-effect is much stronger than the income effect (see
appendix Section \ref{sec:Appendix_treatmentshape} for details). Third, the
ATE-maximizing allocation leads to the minimum disparity in subsidy/tax
rates across the rich and poor, whereas maximizing the CASW leads to the
highest disparity where the poor receive the highest subsidy and the rich
face the highest tax. The optimal ACV curve lies in between.

The right panel in Figure \ref{fig:App2allocations} shows 
CASW when we use the ATE maximizing allocations compared to the CASW
at the CASW-optimal allocations. Evidently, using the ATE-optimal allocation leads to much smaller redistribution of
welfare from the rich to the poor, relative to the CASW-optimal allocation.
Most of this operates at the intensive margin since the two graphs cross
zero close to each other. That is, almost the same set of individuals sees
an increase and decrease in their welfare in the two optimal allocations,
but the extent of welfare-gain is much higher for the poor when using the
CASW-optimal allocations. The figure looks somewhat similar to the optimal
subsidy graph because the price-effect on demand -- and, hence, aggregate
welfare -- is much stronger than the
income-effect. At high incomes, demand becomes less price-sensitive, which
explains why the similarity declines there.

For the alternative version that incorporates sampling uncertainty, the
loss-function analogous to (\ref{lossfunctiongeneral}) is%
\begin{eqnarray}
L\left( \sigma \left( \cdot \right) ,\theta ,c\right) &=&-\int B\left( \bar{p%
},\sigma \left( y\right) ,y,\theta _{1}\right) dF\left( y,\theta _{2}\right)
\notag \\
&&+c\left[ \int \sigma \left( y\right) \times q_{1}\left( \bar{p}-\sigma
\left( y\right) ,y,\theta _{1}\right) dF\left( y,\theta _{2}\right) \right]
^{2}\text{,}
\end{eqnarray}%
The optimal subsidy would solve (\ref{lossfunctiongeneral2}) with this loss
function (results not reported for brevity).

\section{Conclusion}

We show how to incorporate social welfare within traditional econometric
program-evaluation and statistical treatment-assignment problems. Our main
result pertains to the practically important setting of multinomial choice.
The key insight is that the marginal distribution of suitably normalized
individual indirect utility can be expressed as a closed-form functional of
choice probabilities without functional-form assumptions on unobserved
preference heterogeneity and income-effects. This leads to expressions for
average weighted social welfare with weights reflecting planners'
distributional preferences and the optimal targeting of interventions that
maximize aggregate utility under fixed budget. We discuss practical issues
of identification and estimation, connections with and advantages relative
to aggregate Hicksian welfare-measures, and potential extension to ordered
and continuous choice. We illustrate our results using the example of
private tuition demand in India where optimal, income-contingent targeting
of subsidies/taxes leads to very different paths depending on whether
average uptake or average social welfare is being maximized. The main source
of this difference is how price-elasticity of demand varies with income.

\pagebreak

\begin{center}
\textbf{References}
\end{center}

\begin{enumerate}
\item Ahmad, E. and Stern, N., 1984. The theory of reform and Indian
indirect taxes. Journal of Public economics, 25(3), pp.259-298.

\item Amemiya, T. 1978. The estimation of a simultaneous equation
generalized probit model. Econometrica 46, pp.1193-1205.

\item Atkinson, A.B., 1970. On the measurement of inequality. Journal of
economic theory, 2(3), 244-263.

\item Azam, M., 2016. Private tutoring: evidence from India. Review of
Development Economics, 20(4), pp.739-761.

\item Banerjee, A., Niehaus, P. and Suri, T., 2019. Universal basic income
in the developing world. Annual Review of Economics, 11, pp.959-983.

\item Banks, J., Blundell, R. and Lewbel, A., 1996. Tax reform and welfare
measurement: do we need demand system estimation?. The Economic Journal,
106(438), 1227-1241.

\item Bergson, A., 1938. Reformulation of Certain Aspects of Welfare
Economics. Quarterly Journal of Economics, 52.

\item Bhattacharya, D., 2015. Nonparametric welfare analysis for discrete
choice. Econometrica, 83(2), pp.617-649.

\item Bhattacharya, D., 2018. Empirical welfare analysis for discrete
choice: Some general results. Quantitative Economics, 9(2), pp.571-615.

\item Bhattacharya, D., 2021. The empirical content of binary choice models.
Econometrica, 89(1), pp.457-474.

\item Bhattacharya, D. and P. Dupas, 2012. Inferring welfare maximizing
treatment assignment under budget constraints,\ Journal of Econometrics,
March 2012,167(1), 168--196.

\item Blackorby, C. and Donaldson, D., 1988. Money metric utility: A
harmless normalization?. Journal of Economic Theory, 46(1), 120-129.

\item Blundell, R. and Powell, J.L., 2009. Endogeneity in nonparametric and
semiparametric regression models, Chap 8 in Advances in Economics and
Econometrics: Theory and Applications, Eighth World Congress, Cambridge
University Press.

\item Blundell, R.W. and Powell, J.L., 2004. Endogeneity in semiparametric
binary response models. The Review of Economic Studies, 71(3), pp.655-679.

\item Chetty, R., 2009. Sufficient statistics for welfare analysis: A bridge
between structural and reduced-form methods. Annu. Rev. Econ., 1(1),
pp.451-488.

\item Chipman, J. and Moore, J., 1990. Acceptable indicators of welfare
change, consumer's surplus analysis, and the Gorman polar form. Preferences,
Uncertainty, and Optimality: Essays in Honor of Leonid Hurwicz, Westview
Press, Boulder.

\item Cohen, J. and Dupas, P., 2010. Free distribution or cost-sharing?
Evidence from a randomized malaria prevention experiment. Quarterly journal
of Economics, 125(1).

\item De Boor, C. (1978). A practical guide to splines. New York,
Springer-Verlag.

\item Deaton, A., 1984. Econometric issues for tax design in developing
countries, reprinted in \textquotedblleft The Theory of Taxation for
Developing Countries\textquotedblright . Washington, DC: World Bank, 1987.

\item Dreze, J., 1998. Distribution matters in cost-benefit analysis:
Comment on K.A. Brekke. Journal of Public Economics, 70(3), pp.485-488.

\item Dupas, P., 2014. Short-run subsidies and long-run adoption of new
health products: Evidence from a field experiment. Econometrica, 82(1),
pp.197-228.

\item Eilers, P.H. C. and Marx, B.D. (1996). Flexible Smoothing with $B$%
-splines and Penalties, \emph{Statistical Science}, 11, pp. 89-102.

\item Einav, L., Finkelstein, A. and Cullen, M.R., 2010. Estimating welfare
in insurance markets using variation in prices. The quarterly journal of
economics, 125(3), pp.877-921.

\item Feldstein MS. 1999. Tax avoidance and the deadweight loss of the
income tax. Rev. Econ. Stat, 81:674--80

\item Finkelstein, A. and Hendren, N., 2020. Welfare Analysis Meets Causal
Inference, Journal of Economic Perspectives, vol. 34, no. 4, pp. 146-67.

\item Fleurbaey, M. and Hammond, P.J., 2004. Interpersonally comparable
utility. In Handbook of utility theory (pp. 1179-1285). Springer, Boston, MA.

\item Garc\'{\i}a, J.L. and Heckman, J.J., 2022. On criteria for evaluating
social programs (No. w30005). National Bureau of Economic Research.

\item Goldberg, P.K. and Pavcnik, N., 2007. Distributional effects of
globalization in developing countries. Journal of economic Literature,
45(1), pp.39-82.

\item Hammond, P.J., 1990. Interpersonal comparisons of utility: Why and how
they are and should be made, European University Institute.

https://cadmus.eui.eu/bitstream/handle/1814/342/1990\_EUI\%20WP\_ECO%
\_003.pdf?sequence=1

\item Hastie, T., Tibshirani, R., Friedman, J.H. and Friedman, J.H., 2009.
The elements of statistical learning: data mining, inference, and prediction
(Vol. 2, pp. 1-758). New York: springer.

\item Hausman, J. 1981. Exact Consumer's Surplus and Deadweight Loss, 
\textit{The American Economic Review}, Vol. 71, No. 4, 662-676.

\item Hausman JA. 1996. Valuation of new goods under perfect and imperfect
competition. In The Economics of New Goods, eds. T. F. Bresnahan, RJ Gordon,
chap. 5. Chicago: University of Chicago Press, 209--248.

\item Hausman, J.A. and Newey, W.K., 2016. Individual heterogeneity and
average welfare. Econometrica, 84(3), pp.1225-1248.

\item Heckman, J.J. and Vytlacil, E.J., 2007. Econometric evaluation of
social programs, part I: Causal models, structural models and econometric
policy evaluation. Handbook of econometrics Vol. 6, 4779-4874.

\item Hendren, N. and Sprung-Keyser, B., 2020. A unified welfare analysis of
government policies. The Quarterly Journal of Economics, 135(3),
pp.1209-1318.

\item Herriges, J.A. and Kling, C.L., 1999. Nonlinear income effects in
random utility models. Review of Economics and Statistics, 81(1), pp.62-72.

\item Imbens, G.W. and Wooldridge, J.M., 2009. Recent developments in the
econometrics of program evaluation. Journal of economic literature, 47(1),
pp.5-86.

\item Jayachandran, S., 2014. Incentives to teach badly: After-school
tutoring in developing countries. Journal of Development Economics, 108,
pp.190-205.

\item Kitagawa, T. and A. Tetenov 2018. Who Should Be Treated? Empirical
Welfare Maximization Methods for Treatment Choice,\ Econometrica, 86(2),
591--616.

\item Kochar, A., 2005. Can targeted food programs improve nutrition? An
empirical analysis of India's public distribution system. Economic
development and cultural change, 54(1), pp.203-235.

\item Luenberger, D.G., 1969. Optimization by vector space methods. John
Wiley \& Sons.

\item Mas-Colell, A., Whinston, M.D. and Green, J.R., 1995. Microeconomic
theory (Vol. 1). New York: Oxford university press.

\item McFadden, D., 1981. Econometric models of probabilistic choice.
Structural analysis of discrete data with econometric applications, 198272.

\item Manski, Charles F. 2004. Statistical Treatment Rules for Heterogeneous
Populations,\ Econometrica, 72(4), 1221--1246.

\item Manski, C. F. 2014. Choosing size of government under ambiguity:
Infrastructure spending and income taxation. The Economic Journal, 124(576),
pp.359-376.

\item Mayshar, J., 1990. On measures of excess burden and their application.
Journal of Public Economics, 43(3), pp.263-289.

\item McFadden, D. 1981, Econometric Models of Probabilistic Choice, in
Manski and McFadden (eds.), Structural Analysis of Discrete Data with
Econometric Applications, 198-272, MIT Press, Cambridge, Mass.

\item Mirrlees, J.A., 1971. An exploration in the theory of optimum income
taxation. The review of economic studies, 38(2), pp.175-208.

\item Newey, W. K. 1987. Efficient estimation of limited dependent variable
models with endogenous explanatory variables. Journal of Econometrics, 36,
pp. 231-250.

\item Pollak, Robert A. "Welfare comparisons and situation comparisons."
journal of Econometrics 50, no. 1-2 (1991): 31-48.

\item Poterba, J.M., 1991. Is the gasoline tax regressive?. Tax policy and
the economy, 5, pp.145-164.

\item Ramsey, F.P., 1927. A Contribution to the Theory of Taxation. The
Economic Journal, 37(145), 47-61.

\item Rivers, D., and Q. H. Vuong. 1988. Limited information estimators and
exogeneity tests for simultaneous probit models. Journal of Econometrics 39,
pp. 347-366.

\item Saez E. 2001. Using elasticities to derive optimal income tax rates.
Rev. Econ. Stud. 68:205--29

\item Samuelson, P.A. 1947. Foundations of Economic Analysis, Ch. VIII,
"Welfare Economics".

\item Scitovszky, T., 1941. A note on welfare propositions in economics. The
Review of Economic Studies, 9(1), pp.77-88.

\item Sen, A.K. 1970. Collective Choice and Social Welfare. San Francisco:
Holden-Day.

\item Slesnick, D.T., 1998. Empirical approaches to the measurement of
welfare. Journal of Economic Literature, 36(4), pp.2108-2165.

\item Small, K.A. and Rosen, H.S., 1981. Applied welfare economics with
discrete choice models. Econometrica: Journal of the Econometric Society,
pp.105-130.

\item Smith, Richard J., and Richard W. Blundell. "An exogeneity test for a
simultaneous equation Tobit model with an application to labor supply."
Econometrica: journal of the Econometric Society (1986): 679-685.

\item Stern, N., 1987. The theory of optimal commodity and income taxation
in The theory of taxation for developing countries, The World Bank.

\item Stifel, D. and Alderman, H., 2006. The \textquotedblleft Glass of
Milk\textquotedblright\ subsidy program and malnutrition in Peru. The World
Bank Economic Review, 20(3), pp.421-448.

\item Varma, P.K., 2007. The great Indian middle class. Penguin Books India.

\item Tetenov, A., 2012. Statistical treatment choice based on asymmetric
minimax regret criteria. Journal of Econometrics, 166(1), pp.157-165.

\item Train, K.E., 2009. Discrete choice methods with simulation. Cambridge
university press.

\item Williams, H.C.W.L. 1977. On the Formulation of Travel Demand Models
and Economic Measures of User Benefit, Environment \& Planning A 9(3),
285-344.\pagebreak
\end{enumerate}

\section{Technical Appendix}

\textbf{Proof that CDF in Theorem 1 is non-decreasing:}

Note that for $c^{\prime }>c\geq y$,%
\begin{eqnarray*}
&&\Pr \left[ W\left( \mathbf{p},y,\eta \right) \leq c^{\prime }\right] \\
&=&\bar{q}_{0}\left( c^{\prime }-y+p_{1},..c^{\prime }-y+p_{J},c^{\prime
}\right) \\
&=&\Pr \left[ \max \left\{ U_{1}\left( c^{\prime }-\left( c^{\prime
}-y+p_{1}\right) ,\eta \right) ,...,U_{J}\left( c^{\prime }-\left( c^{\prime
}-y+p_{J}\right) ,\eta \right) ,\eta \right\} \leq U_{0}\left( c^{\prime
},\eta \right) \right] \\
&=&\Pr \left[ \max \left\{ U_{1}\left( y-p_{1},\eta \right) ,...,U_{J}\left(
y-p_{J},\eta \right) ,\eta \right\} \leq U_{0}\left( c^{\prime },\eta
\right) \right] \\
&\geq &\Pr \left[ \max \left\{ U_{1}\left( y-p_{1},\eta \right)
,...,U_{J}\left( y-p_{J},\eta \right) ,\eta \right\} \leq U_{0}\left( c,\eta
\right) \right] \text{, since }c^{\prime }>c\text{ and }U_{0}\left( \cdot
,\eta \right) \nearrow \\
&=&\Pr \left[ \max \left\{ U_{1}\left( c-\left( c-y+p_{1}\right) ,\eta
\right) ,...,U_{J}\left( c-\left( c-y+p_{J}\right) ,\eta \right) ,\eta
\right\} \leq U_{0}\left( c,\eta \right) \right] \\
&=&\bar{q}_{0}\left( c-y+p_{1},..c-y+p_{J},c\right) \\
&=&\Pr \left[ W\left( \mathbf{p},y,\eta \right) \leq c\right] \text{.}
\end{eqnarray*}%
For $y>c^{\prime }>c$, we have%
\begin{equation*}
\Pr \left[ W\left( \mathbf{p},y,\eta \right) \leq c\right] =0=\Pr \left[
W\left( \mathbf{p},y,\eta \right) \leq c^{\prime }\right] \text{,}
\end{equation*}%
and finally, for $c^{\prime }\geq y>c$, we have%
\begin{eqnarray*}
\Pr \left[ W\left( \mathbf{p},y,\eta \right) \leq c\right] &=&0 \\
&\leq &\bar{q}_{0}\left( c^{\prime }-y+p_{1},..c^{\prime }-y+p_{J},c^{\prime
}\right) =\Pr \left[ W\left( \mathbf{p},y,\eta \right) \leq c^{\prime }%
\right] \text{.}
\end{eqnarray*}%
\smallskip

\textbf{Proof of Assertion (\ref{11})}: Note that the move from $\mathbf{p}%
^{0}$ to $\infty _{J}$ can be decomposed into that from $\mathbf{p}^{0}$ to $%
\mathbf{p}^{1}$ and then from $\mathbf{p}^{1}$ to $\infty _{J}$, i.e.%
\begin{eqnarray*}
&&CV\left( y,\mathbf{p}^{0},\infty _{J},\eta \right) \\
&=&CV\left( y,\mathbf{p}^{0},\mathbf{p}^{1},\eta \right) +CV\left( \underset{%
\neq y}{\underbrace{y+CV\left( y,\mathbf{p}^{0},\mathbf{p}^{1},\eta \right) }%
},\mathbf{p}^{1},\infty _{J},\eta \right) \\
&\neq &CV\left( y,\mathbf{p}^{0},\mathbf{p}^{1},\eta \right) +CV\left( y,%
\mathbf{p}^{1},\infty _{J},\eta \right) \text{,}
\end{eqnarray*}%
because $y+CV\left( y,\mathbf{p}^{0},\mathbf{p}^{1},\eta \right) \neq y$,
and therefore (\ref{11}) holds.\smallskip

\textbf{Quasilinear Utilities}: If $U_{j}\left( y-p_{j},\eta \right)
=h_{j}\left( \eta \right) +y-p_{j}$, and $U_{0}\left( y,\eta \right)
=h_{0}\left( \eta \right) +y$, i.e. utility is quasilinear in the numeraire,
then our normalization becomes $\mathcal{U}_{0}\left( y,\eta \right) =y$ and 
$\mathcal{U}_{j}\left( y-p_{j},\eta \right) =y+h_{j}\left( \eta \right)
-h_{0}\left( \eta \right) -p_{j}$. Then%
\begin{eqnarray}
W\left( \infty _{J},y,\eta \right) &=&\mathcal{U}_{0}\left( y,\eta \right) =y%
\text{,} \\
W\left( \mathbf{p},y,\eta \right) &=&y+\max \left\{ 0,h_{1}\left( \eta
\right) -p_{1}-h_{0}\left( \eta \right) ,...,h_{J}\left( \eta \right)
-p_{J}-h_{0}\left( \eta \right) \right\}
\end{eqnarray}%
Therefore, 
\begin{eqnarray*}
&&y+\max \left\{ 0,h_{1}\left( \eta \right) -p_{10}-h_{0}\left( \eta \right)
,...,h_{J}\left( \eta \right) -p_{J0}-h_{0}\left( \eta \right) \right\} \\
&=&W\left( \mathbf{p}^{0},y,\eta \right) \\
&=&W\left( \mathbf{p}^{1},y+CV\left( y,\mathbf{p}^{0},\mathbf{p}^{1},\eta
\right) ,\eta \right) \\
&=&\left[ 
\begin{array}{l}
y+CV\left( y,\mathbf{p}^{0},\mathbf{p}^{1},\eta \right) \\ 
+\max \left\{ 0,h_{1}\left( \eta \right) -p_{11}-h_{0}\left( \eta \right)
,...,h_{J}\left( \eta \right) -p_{J1}-h_{0}\left( \eta \right) \right\}%
\end{array}%
\right] \text{.}
\end{eqnarray*}%
Therefore,%
\begin{eqnarray*}
&&CV\left( y,\mathbf{p}^{0},\mathbf{p}^{1},\eta \right) \\
&=&y+\max \left\{ 0,h_{1}\left( \eta \right) -p_{10}-h_{0}\left( \eta
\right) ,...,h_{J}\left( \eta \right) -p_{J0}-h_{0}\left( \eta \right)
\right\} \\
&&-\left( y+\max \left\{ 0,h_{1}\left( \eta \right) -p_{11}-h_{0}\left( \eta
\right) ,...,h_{J}\left( \eta \right) -p_{J1}-h_{0}\left( \eta \right)
\right\} \right) \\
&=&W\left( \mathbf{p}^{0},y,\eta \right) -W\left( \mathbf{p}^{1},y,\eta
\right).
\end{eqnarray*}%
Thus, we have that under quasi-linear preferences,%
\begin{equation}
W\left( \mathbf{p}^{1},y,\eta \right) -W\left( \mathbf{p}^{0},y,\eta \right)
=-CV\left( y,\mathbf{p}^{0},\mathbf{p}^{1},\eta \right) \text{.}
\end{equation}%
Indeed, if utilities are quasilinear, then it follows that the social
marginal utility of income $\frac{\partial \int W\left( \mathbf{p},y,\eta
\right) dF\left( \eta \right) }{\partial y}$ equals%
\begin{equation}
\frac{\partial }{\partial y}\int \left\{ y+\max \left\{ 0,h_{1}\left( \eta
\right) -p_{1}-h_{0}\left( \eta \right) ,...,h_{J}\left( \eta \right)
-p_{J}-h_{0}\left( \eta \right) \right\} \right\} dF\left( \eta \right) =1%
\text{,}
\end{equation}%
which does not depend on $y$

\subsection{Some details of demand estimation}

\label{sec:Appendix_details_demand}

Denote the support of $y$ as $[y_{min},y_{max}]$. Suppose we use B-splines
of degree $q$ with $M+1$ equally spaced knots on $[y_{min},y_{max}]$
(including the end points). Each of the end points in the systems of knots
enters with the multiplicity $q$. Then we have $M+q$ base B-splines which we
denote as 
\begin{equation*}
\mathcal{R}_{m,M+q}(\cdot ;q),\quad m=1,\ldots ,M+q.
\end{equation*}%
The definition of base B-splines through a recursive formula can be found,
e.g., in de Boor 1978.

To incorporate covariates $x$ in the demand estimation, we collect all the
covariates $x$ into an index $x^{\prime }\beta $ and specify the demand
function as 
\begin{equation*}
P\left( q=1\,|\,p,y,x\right) =\Phi \left( +x^{\prime }\beta \right) ,
\end{equation*}%
where $\Phi $ is the C.D.F. of the standard normal distribution. This
specification automatically imposes normalization constraints on probability
(that is, the probability varying between 0 and 1). Moreover, a monotonicity
of this function in a direction of $(p,y)$ is equivalent to the respective
monotonicity of $\beta _{p}p+\sum_{m=1}^{M}\beta _{y,m}\mathcal{R}%
_{m,M+q}(y;q)+x^{\prime }\beta $.

Using the well known formulas for the derivatives of B-splines (de Boor
1978), we obtain the following formulas for the derivative of $%
\sum_{m=1}^{M}\beta _{y,m}\mathcal{R}_{m,M+q}(y;q)$: 
\begin{align*}
\dfrac{ d \sum_{m=1}^{M}\beta _{y,m}\mathcal{R}_{m,M+q}(y;q)}{d y}
=\sum_{m=2}^M \dfrac{\beta_{y.m}-\beta_{y, m-1}}{z_{m+q}-z_{m}} \mathcal{R}%
_{m-1, M+q-1}(y; q-1),
\end{align*}
where $\mathcal{R}_{\tilde{m}, M+q-1}(\cdot; q-1)$, $\tilde{m}=1, \ldots,
M+q-1$, are base polynomials of degree $q-1$ constructed on the system of
knots $\{z_{k}\}, \; k=2,\ldots,M+2q$ (the multiplicity of each boundary
knot is now reduced by 1).

\subsection{Treatment Targeting Problem}

\label{sec:Appendix_treatment}

\subsubsection{Uniqueness and second order condition for treatment targeting
problem}

\label{sec:Appendix_uniqueness}

\textbf{Uniqueness: }In this part of the appendix, we clarify some intuition
for why the optimal subsidy problem has a unique solution and the second
order conditions for constrained maximum. It is easiest to see this in a
simple setting where $Y$ takes 2 values $y_{1}$ and $y_{2}$ w.p. $\pi _{1}$
and $\pi _{2}$ respectively. Then the optimization problem becomes%
\begin{equation*}
\max_{\sigma _{1},\sigma _{2}}\pi _{1}B\left( \sigma _{1},y_{1}\right) +\pi
_{2}B\left( \sigma _{2},y_{2}\right)
\end{equation*}%
s.t.%
\begin{equation*}
\pi _{1}C\left( \sigma _{1},y_{1}\right) +\pi _{2}C\left( \sigma
_{2},y_{2}\right) =M
\end{equation*}

To investigate uniqueness, first consider indifference maps for the benefit
and cost functions. For benefits: let $B^{\prime }\left( \cdot \right) $ and 
$B^{\prime \prime }\left( \cdot \right) $ denote 1st and 2nd price
derivative w.r.t. $\sigma $, then%
\begin{eqnarray*}
\pi _{1}B\left( \sigma _{1},y_{1}\right) +\pi _{2}B\left( \sigma
_{2},y_{2}\right) &=&const. \\
&\Rightarrow &\pi _{1}B^{\prime }\left( \sigma _{1},y_{1}\right) d\sigma
_{1}+\pi _{2}B^{\prime }\left( \sigma _{2},y_{2}\right) d\sigma _{2}=0 \\
&\Rightarrow &\left. \frac{d\sigma _{2}}{d\sigma _{1}}\right\vert _{B}=-%
\frac{\pi _{1}B^{\prime }\left( \sigma _{1},y_{1}\right) }{\pi _{2}B^{\prime
}\left( \sigma _{2},y_{2}\right) }<0 \\
&\Rightarrow &\left. \frac{d^{2}\sigma _{2}}{d\sigma _{1}^{2}}\right\vert
_{B}=-\frac{\pi _{1}B^{^{\prime \prime }}\left( \sigma _{1},y_{1}\right) }{%
\pi _{2}B^{\prime }\left( \sigma _{2},y_{2}\right) }
\end{eqnarray*}

Similarly, for costs, $C\left( \sigma _{1},y_{1}\right) =\sigma _{1}\times
q\left( \bar{p}-\sigma _{1},y_{1}\right) $, let $C^{\prime }\left( \sigma
_{1},y_{1}\right) $ and $C^{\prime \prime }\left( \sigma _{1},y_{1}\right) $
denote 1st and 2nd derivative w.r.t. $\sigma _{1}$. For indifference maps%
\begin{eqnarray*}
\pi _{1}C\left( \sigma _{1},y_{1}\right) +\pi _{2}C\left( \sigma
_{2},y_{2}\right) &=&const. \\
&\Rightarrow &\frac{d\sigma _{2}}{d\sigma _{1}}=-\frac{\pi _{1}C^{\prime
}\left( \sigma _{1},y_{1}\right) }{\pi _{2}C^{\prime }\left( \sigma
_{2},y_{2}\right) }<0 \\
&\Rightarrow &\left. \frac{d^{2}\sigma _{2}}{d\sigma _{1}^{2}}\right\vert
_{C}=-\frac{\pi _{1}C^{^{\prime \prime }}\left( \sigma _{1},y_{1}\right) }{%
\pi _{2}C^{\prime }\left( \sigma _{2},y_{2}\right) }
\end{eqnarray*}

For the application, from the net benefit curves (i.e. benefit minus cost
curves plotted against $\sigma $) we can obtain that%
\begin{equation}
B^{\prime }\left( \sigma _{2},y_{2}\right) -C^{\prime }\left( \sigma
_{2},y_{2}\right) <0\text{; \ }B^{^{\prime \prime }}\left( \sigma
_{1},y_{1}\right) -C^{^{\prime \prime }}\left( \sigma _{1},y_{1}\right) <0.
\end{equation}
From the benefit curves themselves we can get that $B^{\prime }\left( \sigma
_{2},y_{2}\right) >0$, $B^{^{\prime \prime }}\left( \sigma _{1},y_{1}\right)
>0$.

Putting all of this together, we have that%
\begin{eqnarray*}
0 &<&B^{\prime }\left( \sigma _{2},y_{2}\right) <C^{\prime }\left( \sigma
_{2},y_{2}\right) \\
0 &<&B^{^{\prime \prime }}\left( \sigma _{1},y_{1}\right) <C^{^{\prime
\prime }}\left( \sigma _{1},y_{1}\right)
\end{eqnarray*}%
Therefore, from%
\begin{equation*}
\left. \frac{d^{2}\sigma _{2}}{d\sigma _{1}^{2}}\right\vert _{B}=-\frac{\pi
_{1}B^{^{\prime \prime }}\left( \sigma _{1},y_{1}\right) }{\pi _{2}B^{\prime
}\left( \sigma _{2},y_{2}\right) }\text{ and }\left. \frac{d^{2}\sigma _{2}}{%
d\sigma _{1}^{2}}\right\vert _{C}=-\frac{\pi _{1}C^{^{\prime \prime }}\left(
\sigma _{1},y_{1}\right) }{\pi _{2}C^{\prime }\left( \sigma
_{2},y_{2}\right) }\text{,}
\end{equation*}%
we conclude that both indifference curves are decreasing and concave when
viewed from the origin, with the benefit indifference curve less concave
than the constant cost curve. Therefore there is a single internal maxima at
the point where the benefit indifference curve is tangent to the cost
indifference curve.

\textbf{Second-order conditions}: The ideas behind the second-order
conditions for this case will extend to a general case. We write the
Lagrangian as 
\begin{equation*}
\pi _{1}B\left( \sigma _{1},y_{1}\right) +\pi _{2}B\left( \sigma
_{2},y_{2}\right) +\lambda \left( M-\pi _{1}C\left( \sigma _{1},y_{1}\right)
+\pi _{2}C\left( \sigma _{2},y_{2}\right) \right) .
\end{equation*}

The FOC is then 
\begin{eqnarray*}
\pi _{1}B^{\prime }\left( \sigma _{1},y_{1}\right) &=&\lambda \pi
_{1}C^{\prime }\left( \sigma _{1},y_{1}\right) \text{, }\;\pi _{2}B^{\prime
}\left( \sigma _{2},y_{2}\right) =\lambda \pi _{2}C^{\prime }\left( \sigma
_{2},y_{2}\right) \\
&\Rightarrow &\frac{B^{\prime }\left( \sigma _{1},y_{1}\right) }{C^{\prime
}\left( \sigma _{1},y_{1}\right) }=\lambda =\frac{B^{\prime }\left( \sigma
_{2},y_{2}\right) }{C^{\prime }\left( \sigma _{2},y_{2}\right) }.
\end{eqnarray*}

The Bordered Hessian of the Lagrangian is {\small 
\begin{eqnarray*}
H &=&\left[ 
\begin{array}{ccc}
B^{\prime \prime }\left( \sigma _{1},y_{1}\right) -\lambda C^{\prime \prime
}\left( \sigma _{1},y_{1}\right) & 0 & -\pi _{1}C^{\prime }\left( \sigma
_{1},y_{1}\right) \\ 
0 & B^{\prime \prime }\left( \sigma _{2},y_{2}\right) -\lambda C^{\prime
\prime }\left( \sigma _{2},y_{2}\right) & -\pi _{2}C^{\prime }\left( \sigma
_{2},y_{2}\right) \\ 
-\pi _{1}C^{\prime }\left( \sigma _{1},y_{1}\right) & -\pi _{2}C^{\prime
}\left( \sigma _{2},y_{2}\right) & 0%
\end{array}%
\right] \\
&=&\left[ 
\begin{array}{ccc}
B^{\prime \prime }\left( \sigma _{1},y_{1}\right) -\frac{B^{\prime }\left(
\sigma _{1},y_{1}\right) C^{\prime \prime }\left( \sigma _{1},y_{1}\right) }{%
C^{\prime }\left( \sigma _{1},y_{1}\right) } & 0 & -\pi _{1}C^{\prime
}\left( \sigma _{1},y_{1}\right) \\ 
0 & B^{\prime \prime }\left( \sigma _{2},y_{2}\right) -\frac{B^{\prime
}\left( \sigma _{2},y_{2}\right) C^{\prime \prime }\left( \sigma
_{2},y_{2}\right) }{C^{\prime }\left( \sigma _{2},y_{2}\right) } & -\pi
_{2}C^{\prime }\left( \sigma _{2},y_{2}\right) \\ 
-\pi _{1}C^{\prime }\left( \sigma _{1},y_{1}\right) & -\pi _{2}C^{\prime
}\left( \sigma _{2},y_{2}\right) & 0%
\end{array}%
\right]
\end{eqnarray*}%
}

For our solution to FOC to be a local maximizer, we need $\det H\left( \cdot
\right) $ to be positive, i.e.%
\begin{multline*}
-\pi _{2}^{2}\left( C^{\prime }\left( \sigma _{2},y_{2}\right) \right)
^{2}\times \left( B^{\prime \prime }\left( \sigma _{1},y_{1}\right) -\frac{%
B^{\prime }\left( \sigma _{1},y_{1}\right) C^{\prime \prime }\left( \sigma
_{1},y_{1}\right) }{C^{\prime }\left( \sigma _{1},y_{1}\right) }\right) \\
-\pi _{1}^{2}\left( C^{\prime }\left( \sigma _{1},y_{1}\right) \right)
^{2}\times \left( B^{\prime \prime }\left( \sigma _{2},y_{2}\right) -\frac{%
B^{\prime }\left( \sigma _{2},y_{2}\right) C^{\prime \prime }\left( \sigma
_{2},y_{2}\right) }{C^{\prime }\left( \sigma _{2},y_{2}\right) }\right) >0.
\end{multline*}%
Sufficient conditions for this are 
\begin{equation*}
\frac{B^{\prime \prime }\left( \sigma _{1},y_{1}\right) }{B^{\prime }\left(
\sigma _{1},y_{1}\right) }<\frac{C^{\prime \prime }\left( \sigma
_{1},y_{1}\right) }{C^{\prime }\left( \sigma _{1},y_{1}\right) },\quad \frac{%
B^{\prime \prime }\left( \sigma _{2},y_{2}\right) }{B^{\prime }\left( \sigma
_{2},y_{2}\right) }<\frac{C^{\prime \prime }\left( \sigma _{2},y_{2}\right) 
}{C^{\prime }\left( \sigma _{2},y_{2}\right) }.
\end{equation*}

\subsubsection{Shape of Optimal Subsidy as Function of Income}

\label{sec:Appendix_treatmentshape}

In this section we derive conditions that determine shapes of optimal ATE
subsidies. In particular, we have a clear-cut condition of when these
optimal subsidies as functions of income will be decreasing.

Suppose income $y$ takes 2 values $y_{1}<y_{2}$ w.p. $\pi _{1}$ and $\pi
_{2} $ respectively. The expected ATE maximization problem becomes%
\begin{equation*}
\max_{\sigma _{1},\sigma _{2}}\pi _{1}q\left( \bar{p}-\sigma
_{1},y_{1}\right) +\pi _{2}q\left( \bar{p}-\sigma _{2},y_{2}\right)
\end{equation*}%
s.t.%
\begin{equation*}
\pi _{1}\sigma _{1}q\left( \bar{p}-\sigma _{1},y_{1}\right) +\pi _{2}\sigma
_{2}q\left( \bar{p}-\sigma _{2},y_{2}\right) =M\text{.}
\end{equation*}

The Lagrangian is given by%
\begin{equation*}
\pi _{1}q\left( \bar{p}-\sigma _{1},y_{1}\right) +\pi _{2}q\left( \bar{p}%
-\sigma _{2},y_{2}\right) +\lambda \left( M-\pi _{1}\sigma _{1}q\left( \bar{p%
}-\sigma _{1},y_{1}\right) -\pi _{2}\sigma _{2}q\left( \bar{p}-\sigma
_{2},y_{2}\right) \right)
\end{equation*}

yielding FOC%
\begin{eqnarray}
-\pi _{1}\frac{\partial }{\partial p}q\left( \bar{p}-\sigma
_{1},y_{1}\right) &=&\lambda \pi _{1}\left( q\left( \bar{p}-\sigma
_{1},y_{1}\right) -\sigma _{1}\frac{\partial }{\partial p}q\left( \bar{p}%
-\sigma _{1},y_{1}\right) \right) \\
-\pi _{2}\frac{\partial }{\partial p}q\left( \bar{p}-\sigma
_{2},y_{2}\right) &=&\lambda \pi _{2}\left( q\left( \bar{p}-\sigma
_{2},y_{2}\right) -\sigma _{2}\frac{\partial }{\partial p}q\left( \bar{p}%
-\sigma _{2},y_{2}\right) \right)
\end{eqnarray}

Taking ratios and using $\eta _{1}\left( \bar{p},\sigma _{1},y_{1}\right)
\equiv -\frac{\frac{\partial }{\partial p}q\left( \bar{p}-\sigma
_{1},y_{1}\right) }{q\left( \bar{p}-\sigma _{1},y_{1}\right) }>0$, we get%
\begin{eqnarray}
\frac{\frac{\partial }{\partial p}q\left( \bar{p}-\sigma _{1},y_{1}\right) }{%
\frac{\partial }{\partial p}q\left( \bar{p}-\sigma _{2},y_{2}\right) } &=&%
\frac{q\left( \bar{p}-\sigma _{1},y_{1}\right) -\sigma _{1}\frac{\partial }{%
\partial p}q\left( \bar{p}-\sigma _{1},y_{1}\right) }{q\left( \bar{p}-\sigma
_{2},y_{2}\right) -\sigma _{2}\frac{\partial }{\partial p}q\left( \bar{p}%
-\sigma _{2},y_{2}\right) }  \notag \\
&\Longrightarrow &\frac{\frac{\frac{\partial }{\partial p}q\left( \bar{p}%
-\sigma _{1},y_{1}\right) }{q\left( \bar{p}-\sigma _{1},y_{1}\right) }}{%
\frac{\frac{\partial }{\partial p}q\left( \bar{p}-\sigma _{2},y_{2}\right) }{%
q\left( \bar{p}-\sigma _{2},y_{2}\right) }}=\frac{1-\sigma _{1}\frac{\frac{%
\partial }{\partial p}q\left( \bar{p}-\sigma _{1},y_{1}\right) }{q\left( 
\bar{p}-\sigma _{1},y_{1}\right) }}{1-\sigma _{2}\frac{\frac{\partial }{%
\partial p}q\left( \bar{p}-\sigma _{2},y_{2}\right) }{q\left( \bar{p}-\sigma
_{2},y_{2}\right) }}  \notag \\
&\Longrightarrow &\frac{\eta _{1}}{\eta _{2}}=\frac{1+\sigma _{1}\eta _{1}}{%
1+\sigma _{2}\eta _{2}} \\
&\Longrightarrow &\eta _{1}\left( 1+\sigma _{2}\eta _{2}\right) =\left(
1+\sigma _{1}\eta _{1}\right) \eta _{2}  \notag \\
&\Longrightarrow &\eta _{1}+\sigma _{2}\eta _{2}\eta _{1}=\eta _{2}+\sigma
_{1}\eta _{1}\eta _{2}  \notag \\
&\Longrightarrow &\left( \sigma _{2}-\sigma _{1}\right) \eta _{2}\eta
_{1}=\eta _{2}-\eta _{1}  \notag \\
&\Longrightarrow &\sigma _{2}-\sigma _{1}=\frac{\eta _{2}-\eta _{1}}{\eta
_{2}\eta _{1}}=\frac{1}{\eta _{1}}-\frac{1}{\eta _{2}}
\end{eqnarray}

The above inequality is consistent with $\sigma _{2}<\sigma _{1}$ (i.e.
subsidy is progressive) if and only if $\eta _{2}<\eta _{1}$, i.e.%
\begin{equation}
\left\vert \frac{\frac{\partial }{\partial p}q\left( \bar{p}-\sigma
_{1},y_{1}\right) }{q\left( \bar{p}-\sigma _{1},y_{1}\right) }\right\vert
>\left\vert \frac{\frac{\partial }{\partial p}q\left( \bar{p}-\sigma
_{2},y_{2}\right) }{q\left( \bar{p}-\sigma _{2},y_{2}\right) }\right\vert
\label{9}
\end{equation}%
for $y_{1}<y_{2}$ and $\sigma _{2}<\sigma _{1}$, i.e. the price sensitivity
of demand measured by $\left. \left\vert \frac{\partial }{\partial p}\ln
q\left( p,y\right) \right\vert \right\vert _{p=\tilde{p},y=\tilde{y}}$ is
decreasing in $\tilde{y}$ at fixed $\tilde{p}$ and is increasing in $\tilde{p%
}$ for fixed $\tilde{y}$. The derivations here clearly extend from two to
multiple possible values of income.

For the application, Figure \ref{fig:App2shape} depicts the absolute values
of the derivatives $\left. \left\vert \frac{\partial }{\partial p}\ln
q\left( p,y\right) \right\vert \right\vert $ as functions of incomes (for
incomes from the 3rd till the 97th percentile) at different price
percentiles (even though $\bar{p}$ is the median price, since the subsidy
can be negative, we consider percentiles above the median). These absolute
values are decreasing in $y$ and, moreover, there is an ordering of the
curves with respect to price percentiles. Therefore, the optimal ATE subsidy
cannot be increasing in $y$. Indeed, suppose it increased when we moved from 
$y_{1}$ to $y_{2}$, where $y_{2}>y_{1}$. This would mean that we would move
to a lower curve and, due to the decreasing nature of each curve, the
inequality in (\ref{9}) would hold, which leads to a contradiction.
Therefore, the optimal ATE subsidy is decreasing in income, as illustrated
in Figure \ref{fig:App2allocations}. Because of the consistent shape pattern
of curves across the incomes, similar findings apply also CASW and ACV
criteria.

\begin{figure}[tbp]
\centering
\includegraphics[width=0.65%
\textwidth]{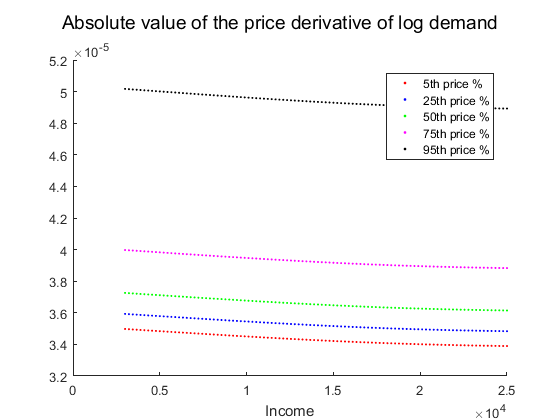}
\caption{{\protect\small Private tuition application. Absolute values of the
derivatives $\left. \left\vert \frac{\partial }{\partial p}\ln q\left(
p,y\right) \right\vert \right\vert $ as functions of incomes (for incomes
from the 3rd till the 97th percentile) at different price percentiles}.}
\label{fig:App2shape}
\end{figure}

\end{document}